\documentclass{article}
\usepackage[utf8]{inputenc}
\usepackage{amsmath, amsfonts, amssymb}
\usepackage{hyperref} 
\usepackage{tabularx}
\usepackage{graphicx}
\usepackage{dcolumn}
\usepackage{bm}
\usepackage{array}
\usepackage{amsthm}
\usepackage{wasysym}
\usepackage{authblk}

\usepackage[backend=biber, 
            style=numeric-comp,
            sorting=nyt,
            sortcites=true,
            maxbibnames=99]{biblatex}
            \renewbibmacro{in:}{}

\usepackage[T1]{fontenc}        
\usepackage{csquotes}           
\usepackage{pdfcomment}         

\DeclareCiteCommand{\mycite}
  {\usebibmacro{prenote}}
  {%
    \pdftooltip{\printnames{labelname}~(\printfield{year})}%
               {\printfield{title}. \printnames{author}. \printfield{journaltitle} \printfield{year}.}%
  }
  {\multicitedelim}
  {\usebibmacro{postnote}}
\usepackage{xcolor}

\theoremstyle{plain}
\newtheorem{theorem}{Theorem}[section]

\newtheorem{proposition}{Proposition}[section]

\theoremstyle{definition}
\newtheorem{definition}{Definition}[section]
\newtheorem{example}[theorem]{Example}
\newtheorem{note}[theorem]{Remark}

\textwidth=16.5 truecm
\textheight=23 truecm
\hoffset=-2.4 truecm
\voffset=-2 truecm

\newcommand{\R}{\mathbb{R}}
\newcommand{\Ld}{\mathcal{L}}
\newcommand{\bX}{{\bf X}}
\newcommand{\rk}{\:{\rm rk}\:}
\newcommand{\bEta}{{\bm \eta}}
\newcommand{\im}{{\mathrm{im}}}

\title{A $k$-contact Geometrical Approach to Pseudo-Gauge Transformation}
\author[1,4]{Mykhailo Hontarenko}
\author[2]{Javier de Lucas}
\author[2,3,5]{Adam Maskalaniec}

\affil[1]{\small Institute of Theoretical Physics, Faculty of Physics, Astronomy, and Applied Computer Science, Jagiellonian University, ul. Łojasiewicza 11, 30-348 Kraków, Poland}
\affil[2]{\small Faculty of Physics, University of Warsaw, ul. Pasteura 5, 02-093 Warszawa, Poland}
\affil[3]{\small Institute of Mathematics, Polish Academy of Sciences, ul. Śniadeckich 8, 00-656 Warszawa, Poland}
\affil[4]{\small Doctoral School of Exact and Natural Sciences, Jagiellonian University, ul. Łojasiewicza 11, 30-348 Kraków, Poland}
\affil[5]{\small Doctoral School of Exact and Natural Sciences, University of Warsaw, ul. Stefana Banacha 2c, 02-097 Warsaw, Poland}

\addbibresource{pgt.bib}
\begin{document}

\maketitle

\begin{abstract}
We propose a starting point to the geometric description for the pseudo-gauge ambiguity in relativistic hydrodynamics, showing that it corresponds to the freedom to redefine the thermodynamic equilibrium state of the system. To do this, we develop for the first time a description of a relativistic hydrodynamic-like theory using $k$-contact geometry. In this approach, thermodynamic laws are encoded in a $k$-contact form, thermodynamical states are described via $k$-contact Legendrian submanifolds, and conservation laws emerge as a consequence of Hamilton-de Donder-Weyl (HdDW) equations. The inherent non-uniqueness of these solutions is identified as the source of the pseudo-gauge freedom. We explicitly demonstrate how this redefinition of equilibrium works in a model of a Bjorken-like expansion, where a pseudo-gauge transformation is shown to leave the physical dissipation invariant.
\end{abstract}

\section{Introduction}

Pseudo-gauge freedom, or pseudo-gauge ambiguity, has recently gained significant attention in physics, particularly in the context of relativistic hydrodynamics, spin-hydrodynamics, and magnetohydrodynamics, which are used to describe heavy-ion collisions. The energy-momentum tensor, which encapsulates the dynamics of physical systems, is known to be defined only up to the derivative of an antisymmetric tensor. This non-uniqueness was first addressed by Belinfante in 1939 \cite{Be39} and later by Rosenfeld in 1940 \cite{Ro40}. A broader analysis of this issue was presented in \cite{He76}, where it was explored in the context of spinning matter within gravitational theory. 

While pseudo-gauge transformations (PGTs) have been studied extensively in a wide range of classical field theories \cite{DFHR24,DFJR23} and quantum formulations of field theory (QFT) \cite{BT12,Na12,FP20,LSY20,Bu21, DFRS21,WWS22,DFJR23,BH25}, the results often exhibit a dependence on the choice of pseudo-gauge. This variability stems from the fact that proper pseudo-gauge transformations have yet to be rigorously and universally defined. 

Pseudo-gauge freedom extends beyond the energy-momentum tensor, affecting other conserved currents, such as the baryon current and the spin tensor. This freedom also influences the entropy current \cite{BDS23}, a critical component for studying dissipative corrections in hydrodynamic systems out of equilibrium. The core of the problem lies in the definition of thermal equilibrium itself. Pseudo-gauge transformations introduce gradient-dependent terms that are often associated with dissipation, blurring the line between a non-dissipative equilibrium state and a dissipative one. A proper definition of equilibrium should be robust against such ambiguities, yet no consensus has been reached on how to resolve this non-uniqueness, particularly for out-of-equilibrium configurations, too.

In this work, we propose a novel geometric origin for the pseudo-gauge ambiguity.  To achieve this, we develop, for the first time, a description of a relativistic hydrodynamic-like theory using the formalism of $k$-contact geometry. $k$-Contact geometry appeared recently as a generalization of contact geometry to deal with dissipative field theories \cite{RIV_23a,LRS24,GGM+_21,GRR_22a,SF_25,RSS_24a}. In this work $k$-contact geometry provides a natural setting where thermodynamics and dynamics are unified: the fundamental thermodynamic relations are encoded in Legendrian submanifolds of a $k$-contact manifold, while the conservation laws emerge as a consequence of the associated Hamilton-de Donder-Weyl (HdDW) equations. Then, the pseudo-gauge transformations, understood as mappings between different solutions of the HdDW equations, can either preserve the Legendrian submanifolds corresponding to chosen sets of equilibrium states or change the chosen set of equilibrium states.

Briefly speaking, our central result is the identification of the inherent non-uniqueness of solutions to the HdDW equations as the mathematical source of the pseudo-gauge freedom. We demonstrate that different solutions correspond to different choices of Legendrian submanifolds, which we interpret as different, yet physically equivalent, definitions of the equilibrium state. This perspective offers a unified framework that can be extended to more complex systems, such as magnetohydrodynamics or spin hydrodynamics, by incorporating additional degrees of freedom. We explicitly demonstrate our approach with a model of a Bjorken-like expansion, showing that a pseudo-gauge transformation is absorbed by this redefinition of equilibrium, leaving the physical dissipation invariant. Mathematically, we use relevant physical applications of $k$-contact manifolds, as well as develop the theory of Legendrian submanifolds in $k$-contact geometry, to describe thermodynamical relativistic states. The latter mathematically concerns a redefinition of the standard notion of a $k$-contact Legendrian manifold, the analysis of the dimension of Legendrian submanifolds in polarised $k$-contact manifolds, their possible parametrisations via particular families of $k$-contact Darboux coordinates \cite{Ri21}, and their analysis for the first time via parametrizing $k$-functions, related to generating functions for contact Legendrian submanifolds for the case $k=1$.

The outline of this paper is as follows. Section 2 provides a concise overview of the essential concepts of $k$-contact geometry. Section 3 introduces the framework of $k$-contact Hamiltonian systems and the HdDW equations. In Section 4, we discuss and provide some new results on  Legendrian submanifolds in $k$-contact polarised manifolds, which are crucial for defining physical equilibrium states. Section 5 applies this formalism to develop a $k$-contact description of an extensive relativistic hydrodynamic-like theory. The formal analysis of the pseudo-gauge degrees of freedom from this geometric perspective is presented in Section 6. In Section 7, we provide a concrete physical interpretation, applying our framework to a Bjorken-like expansion. Finally, Section 8 summarizes our findings and discusses future research directions. We assume familiarity of the reader with contact geometry, or at least with differential geometry.

\section{Basics on {\it k}-contact geometry}

It turns out that thermodynamical systems as well as relativistic hydrodynamics can be described geometrically by means of $k$-contact geometry. This branch of differential geometry was originally developed to extend the description of dissipative Hamiltonian systems in classical mechanics, which is provided by contact geometry, to classical field theories. This section briefly introduces the main notions and fundamental results of $k$-contact geometry. For a more detailed introduction, see \cite{LRS24} and \cite[Chapter 7]{Ri21}.  

Let us introduce conventions that will be used in this work. Throughout the article, we assume Einstein summation convention unless otherwise stated. A {\it generalised subbundle} $D$ of a vector bundle $E$ over a manifold $M$ is a subset of $E$ such that $D\cap E_x=:D_x$ is a vector subspace of the fibre $E_x$ for all $x\in M$. A generalised subbundle $D$ is {\it smooth} if it can be locally spanned by a finite family of sections of the vector bundle $E$. The {\it rank} of a generalised subbundle $D$ at $x\in M$ is the dimension of the vector space $D_x$. The generalised subbundle $D$ is {\it regular} if it is  locally of constant rank. In this work, generalised subbundles in the tangent bundle are called {\it distributions}, while distributions in the cotangent bundle are called {\it codistributions}. If a vector field $X$ takes values in a distribution $D$, we write $X\in \Gamma(D)$. 

Every differential $p$-form on a manifold $M$ taking values in a vector space $V$, namely an element $\bm\alpha\in\Omega^p(M, V)=\Omega^p(M)\otimes V$ induces a distribution $\ker\bm\alpha$, which is the  kernel of the induced map $\hat{\bm\alpha}:TM\longrightarrow\Lambda^{p-1}T^*M\otimes V$ with $\hat{\bm \alpha}(X)=\iota_X{\bm \alpha}$. 

Given a distribution $D\subset TM$, the annihilator of $D$, which is denoted by $D^\circ$, is a codistribution. Similarly, the annihilator of a codistribution is a distribution.

\begin{definition}\label{def:kcontactform} A {\it $k$-contact form} on a manifold $M$ is an $\R^k$-valued differential form ${\bm \eta}\in\Omega^1(M,\R^k)$ such that:
\begin{enumerate}
    \item $\ker{\bm \eta}\subset TM$ is a regular non-zero distribution of corank $k$,\label{eq:Contact1}
    \item $\ker d{\bm \eta}\subset TM$ is a regular distribution of rank $k$,\label{eq:Contact2}
    \item $\ker{\bm \eta}\cap \ker{d\bm \eta}=0$.\label{eq:Contact3}
\end{enumerate}
The distribution $\ker{\bm \eta}$ is called a {\it $k$-contact distribution} while $\ker{d\bm \eta}$ is called the associated {\it Reeb distribution}. The pair $(M,{\bm \eta})$ is called, for simplicity\footnote{In a more precise manner, $(M,\bEta)$ should be called a {\it co-oriented $k$-contact manifold} \cite{LRS24}, but since we here deal only with co-oriented $k$-contact manifolds, the term `co-oriented' will be omitted.}, a {\it $k$-contact manifold}. From now on, we write  $\bm\eta=\eta^\alpha\otimes e_\alpha$ for a $k$-contact form on a $k$-contact manifold, where $\eta_1,\ldots,\eta_k\in\Omega^1(M)$ and $\{e_1,\ldots,e_k\}$ stand for the canonical basis of $\R^k$.
\end{definition}
There exists a more general notion of a $k$-contact manifold that relies only on the properties of the contact distribution and does not require the definition of the $k$-contact form \cite{LRS24}. It is also worth noting that there are other mathematical structures called $k$-contact structures \cite{Je01}. 

Let us prove a proposition that characterizes $k$-contact manifolds in terms of local coordinates.

\begin{theorem}\label{Th:Reeb}
    Let $(M,\bm\eta)$ be a $k$-contact manifold. Then, there exist vector fields $R_1,\ldots,R_k\in\mathfrak{X}(M)$, defined uniquely by
    \begin{equation}\label{eq:Reeb1}
        \iota_{R_\alpha}\eta^\beta=\delta_\alpha^\beta,\qquad \alpha,\beta=1,\ldots,k,
    \end{equation}
    \begin{equation}\label{eq:Reeb2}
        \iota_{R_\alpha}d\eta^\beta=0,\qquad \alpha,\beta=1,\ldots,k,
    \end{equation}
    called the {\it Reeb vector fields} of $(M,\bEta)$. The Reeb vector fields span the Reeb distribution and commute with each other, namely
    $$
    [R_\alpha,R_\beta]=0,\qquad \alpha,\beta=1,\ldots,k.
    $$
\end{theorem}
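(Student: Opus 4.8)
The plan is to construct the Reeb vector fields from the defining data of the $k$-contact structure and then verify uniqueness and the bracket relations. First I would observe that conditions (\ref{eq:Contact1})--(\ref{eq:Contact3}) give a direct sum decomposition $TM = \ker\bm\eta \oplus \ker d\bm\eta$ at every point, where $\ker\bm\eta$ has corank $k$ and $\ker d\bm\eta$ has rank $k$. Both summands are regular distributions, so this is a smooth splitting of the tangent bundle. I would then restrict the $\R^k$-valued map $\hat{\bm\eta}\colon TM \to \R^k$ (sending $X$ to $(\iota_X\eta^1,\ldots,\iota_X\eta^k)$) to the subbundle $\ker d\bm\eta$. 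By condition (\ref{eq:Contact3}), this restriction $\hat{\bm\eta}|_{\ker d\bm\eta}\colon \ker d\bm\eta \to \R^k$ has trivial kernel, and since both bundles have rank $k$, it is a fibrewise isomorphism. Hence there are unique sections $R_1,\ldots,R_k$ of $\ker d\bm\eta$ with $\iota_{R_\alpha}\eta^\beta = \delta_\alpha^\beta$; smoothness of these sections follows because the isomorphism is a smooth bundle map and we are pulling back the constant sections $e_\alpha$. Equation (\ref{eq:Reeb2}) then holds automatically because $R_\alpha \in \Gamma(\ker d\bm\eta)$, and uniqueness is built into the construction: any vector field satisfying (\ref{eq:Reeb1}) and (\ref{eq:Reeb2}) lies in $\ker d\bm\eta$ and maps to $e_\alpha$ under the isomorphism, so it must equal $R_\alpha$. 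That the $R_\alpha$ span $\ker d\bm\eta$ is immediate from their linear independence (forced by $\iota_{R_\alpha}\eta^\beta=\delta_\alpha^\beta$) and the rank count.

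For the commutation relations, I would compute $\iota_{[R_\alpha,R_\beta]}\eta^\gamma$ and $\iota_{[R_\alpha,R_\beta]}d\eta^\gamma$ and show both vanish for all $\gamma$; then $[R_\alpha,R_\beta] \in \ker\bm\eta \cap \ker d\bm\eta = 0$ by condition (\ref{eq:Contact3}). For the first, use the identity $\iota_{[X,Y]}\eta = \mathcal{L}_X\iota_Y\eta - \iota_Y\mathcal{L}_X\eta = X(\iota_Y\eta) - \iota_Y(\iota_X d\eta + d\,\iota_X\eta)$. Applying this with $X=R_\alpha$, $Y=R_\beta$, $\eta=\eta^\gamma$: the term $R_\alpha(\iota_{R_\beta}\eta^\gamma) = R_\alpha(\delta_\beta^\gamma) = 0$, the term $\iota_{R_\beta}\iota_{R_\alpha}d\eta^\gamma = 0$ by (\ref{eq:Reeb2}), and the term $\iota_{R_\beta}d(\iota_{R_\alpha}\eta^\gamma) = \iota_{R_\beta}d(\delta_\alpha^\gamma) = 0$; hence $\iota_{[R_\alpha,R_\beta]}\eta^\gamma = 0$. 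For the second, use $\mathcal{L}_{R_\alpha}d\eta^\gamma = d\,\iota_{R_\alpha}d\eta^\gamma + \iota_{R_\alpha}d\,d\eta^\gamma = 0$ (the first summand by (\ref{eq:Reeb2}), the second since $d^2=0$), and then $\iota_{[R_\alpha,R_\beta]}d\eta^\gamma = \mathcal{L}_{R_\alpha}\iota_{R_\beta}d\eta^\gamma - \iota_{R_\beta}\mathcal{L}_{R_\alpha}d\eta^\gamma = \mathcal{L}_{R_\alpha}(0) - \iota_{R_\beta}(0) = 0$.

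The only place demanding care — not so much an obstacle as a point that must be stated cleanly — is the passage from pointwise linear algebra to smooth global vector fields: one needs that $\ker d\bm\eta$ is a genuine smooth vector subbundle (granted by the regularity assumption in Definition \ref{def:kcontactform}) and that $\hat{\bm\eta}$ is a smooth bundle morphism, so that inverting its restriction yields smooth sections. Everything else is a routine application of Cartan's formula and the interior-product/Lie-derivative identities. I would present the argument in the order above: splitting of $TM$, construction and uniqueness of the $R_\alpha$, then the two interior-product computations establishing $[R_\alpha,R_\beta]=0$.
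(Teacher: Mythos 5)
Your proof is correct and follows essentially the same route as the paper's: the splitting $TM=\ker\bm\eta\oplus\ker d\bm\eta$ forced by the three defining conditions, the unique dual frame of $\ker d\bm\eta$, and the verification that $\iota_{[R_\alpha,R_\beta]}\bm\eta$ and $\iota_{[R_\alpha,R_\beta]}d\bm\eta$ both vanish so that $[R_\alpha,R_\beta]\in\ker\bm\eta\cap\ker d\bm\eta=0$. Your version merely spells out the Cartan-formula computations and the smoothness of the sections in more detail than the paper does.
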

\begin{proof}As $\ker{\bm \eta}\oplus \ker{d\bm\eta}=TM$, while  $\ker d\bm\eta$ has constant rank equal to $k$, and $\ker \bEta$ has constant rank equal to $\dim M -k$, there exist uniquely defined vector fields $R_1,\ldots,R_k$ dual to $\eta^1,\ldots,\eta^k$ taking values in $\ker d\bEta$. Since $R_1,\ldots,R_k$ are dual to $\eta^1,\ldots,\eta^k$,  the vectors  fields $R_{1},\ldots,R_{k}$ become linearly independent at every point of $M$. As $\ker{d\bm\eta}$ has rank $k$, the vector fields $R_1,\ldots,R_k$ form a basis for $\ker{d\bm \eta}$. By conditions \eqref{eq:Reeb1} and \eqref{eq:Reeb2}, it follows that $\mathcal{L}_{R_\alpha}\bEta=0$. Then,   the identity $\iota_{[R_\alpha, R_\beta]}=\Ld_{R_\alpha}\iota_{R_\beta}-\iota_{R_\beta}\Ld_{R_\alpha}$ and the definition of Reeb vector fields as dual vector fields to $\eta^1,\ldots, \eta^k$ show that
    $$
    \iota_{[R_\alpha,R_\beta]}\bm\eta=0,\qquad\iota_{[R_\alpha,R_\beta]}d\bm\eta=0, 
    $$
and since $\ker \bEta\cap\ker d\bEta=0$, it follows that $[R_\alpha,R_\beta]=0$ for $\alpha,\beta=1,\ldots,k$.
\end{proof}

\begin{proposition}
    The Reeb distribution of a $k$-contact manifold $(M,\bm \eta)$  is integrable.
\end{proposition}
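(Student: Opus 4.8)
The plan is to reduce the statement to the classical Frobenius theorem, exploiting what has already been proved in Theorem \ref{Th:Reeb}. First I would recall that, by condition \ref{eq:Contact2} of Definition \ref{def:kcontactform}, the Reeb distribution $\ker d\bEta$ is a regular distribution of constant rank $k$, and that Theorem \ref{Th:Reeb} provides a global frame for it, namely the Reeb vector fields $R_1,\ldots,R_k$, which moreover satisfy $[R_\alpha,R_\beta]=0$ for all $\alpha,\beta=1,\ldots,k$.

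The second step is to verify involutivity. Since the $R_\alpha$ form a frame for $\ker d\bEta$, any two sections $X,Y\in\Gamma(\ker d\bEta)$ can be written as $X=f^\alpha R_\alpha$ and $Y=g^\beta R_\beta$ with $f^\alpha,g^\beta\in C^\infty(M)$. Expanding the Lie bracket by the Leibniz rule,
$$
[X,Y]=f^\alpha g^\beta[R_\alpha,R_\beta]+f^\alpha(R_\alpha g^\beta)R_\beta-g^\beta(R_\beta f^\alpha)R_\alpha,
$$
and the first term vanishes by Theorem \ref{Th:Reeb}, so $[X,Y]$ is again a $C^\infty(M)$-linear combination of the $R_\alpha$, i.e. a section of $\ker d\bEta$. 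Hence the Reeb distribution is involutive.

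Finally, invoking the Frobenius theorem for regular distributions — applicable precisely because $\ker d\bEta$ has constant rank — involutivity yields integrability, which completes the argument. I do not expect any genuine obstacle here: the only point requiring care is checking that the hypotheses of Frobenius are met, and this is guaranteed by the regularity built into the definition of a $k$-contact form together with the commutation relations already established in Theorem \ref{Th:Reeb}. Alternatively, one can bypass Frobenius entirely and observe that the joint flow of the pairwise-commuting, pointwise-linearly-independent vector fields $R_1,\ldots,R_k$ defines, through each point of $M$, a $k$-dimensional immersed submanifold whose tangent space is everywhere $\ker d\bEta$, exhibiting the integral leaves explicitly.
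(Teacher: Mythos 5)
Your proof is correct, but it takes a different route from the paper's. The paper establishes involutivity directly for arbitrary sections $X,Y\in\Gamma(\ker d\bm\eta)$ using Cartan calculus: from $\iota_{[X,Y]}d\bm\eta=\mathcal{L}_X\iota_Y d\bm\eta-\iota_Y\mathcal{L}_X d\bm\eta$, the first term vanishes because $\iota_Y d\bm\eta=0$, and the second vanishes by Cartan's formula together with $\iota_X d\bm\eta=0$ and $d(d\bm\eta)=0$; regularity then gives integrability by Frobenius. You instead lean on Theorem \ref{Th:Reeb}: the Reeb vector fields $R_1,\ldots,R_k$ form a (global, since $f^\alpha=\eta^\alpha(X)$ is smooth) commuting frame for $\ker d\bm\eta$, and the Leibniz expansion of $[f^\alpha R_\alpha, g^\beta R_\beta]$ shows the bracket of two sections stays in the distribution, with the same Frobenius conclusion. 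The trade-off: the paper's argument is independent of Theorem \ref{Th:Reeb} and in fact shows more generally that the kernel of any closed $\mathbb{R}^k$-valued two-form of constant rank is involutive; your argument makes the proposition logically dependent on the commutation relations $[R_\alpha,R_\beta]=0$, but is more concrete, and your alternative via the joint flow of the commuting, pointwise independent Reeb fields bypasses Frobenius entirely and exhibits the integral leaves explicitly. Both are complete proofs.
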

\begin{proof}
    If $X,Y\in\Gamma(\ker d \bm\eta)$, then
    $$
    \iota_{[X,Y]}d\bm\eta=\Ld_X\iota_Yd\bm\eta-\iota_Y\Ld_X d\bm\eta=-\iota_Y\Ld_X d\bm\eta=-\iota_Y(d\iota_X+\iota_Xd)\bm\eta=0
    $$
    and $[X,Y]\in \ker{d\bm \eta}$. Hence, the Reeb distribution is involutive. Moreover, since the Reeb distribution is regular, it is also integrable. 
\end{proof}
\begin{proposition}
    Every $n$-dimensional $k$-contact manifold $(M,\bm\eta)$ admits local coordinates $\{s^\alpha,x^I\}$, for $\alpha=1,\ldots,k$ and $I=1,\ldots,n-k$ such that
    $$
    R_\alpha=\frac{\partial}{\partial s^\alpha},\qquad \eta^\alpha=ds^\alpha-f^\alpha_I(x)dx^I,
    $$
    for some functions $f^\alpha_I$ depending only on the coordinates $x^I$.
\end{proposition}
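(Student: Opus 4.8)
The plan is to build the coordinates in two stages, first straightening out the Reeb directions and then choosing transverse coordinates so that the remaining freedom in $\bm\eta$ is reduced to $x$-dependent coefficients. Since the Reeb vector fields $R_1,\dots,R_k$ commute (Theorem~\ref{Th:Reeb}) and are pointwise linearly independent, the simultaneous flow-box theorem for commuting vector fields gives local coordinates $\{s^\alpha, x^I\}$, $\alpha=1,\dots,k$, $I=1,\dots,n-k$, in which $R_\alpha=\partial/\partial s^\alpha$. Here I would take the $x^I$ to be, initially, any functions whose differentials together with $ds^1,\dots,ds^k$ form a coframe; equivalently, a local transversal to the Reeb foliation parametrised by $x^I$ through which each leaf passes once.

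Next I would pin down the form of $\eta^\alpha$ in these coordinates. Write $\eta^\alpha = A^\alpha_\beta(s,x)\,ds^\beta + B^\alpha_I(s,x)\,dx^I$. Contracting with $R_\gamma=\partial/\partial s^\gamma$ and using $\iota_{R_\gamma}\eta^\alpha=\delta^\alpha_\gamma$ from \eqref{eq:Reeb1} forces $A^\alpha_\beta=\delta^\alpha_\beta$, so $\eta^\alpha = ds^\alpha + B^\alpha_I(s,x)\,dx^I$. To see that the coefficients $B^\alpha_I$ do not depend on $s$, I would use \eqref{eq:Reeb2}, i.e. $\iota_{R_\gamma}d\eta^\alpha=0$: computing $d\eta^\alpha = \big(\partial B^\alpha_I/\partial s^\beta\big)\,ds^\beta\wedge dx^I + \tfrac12\big(\partial_J B^\alpha_I - \partial_I B^\alpha_J\big)\,dx^J\wedge dx^I$ and contracting with $\partial/\partial s^\gamma$ yields $\big(\partial B^\alpha_I/\partial s^\gamma\big)\,dx^I=0$ for every $\gamma$, hence $\partial B^\alpha_I/\partial s^\gamma=0$. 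Thus $B^\alpha_I=B^\alpha_I(x)$, and setting $f^\alpha_I:=-B^\alpha_I$ gives $\eta^\alpha = ds^\alpha - f^\alpha_I(x)\,dx^I$ with $f^\alpha_I$ depending only on the $x$'s. One should also check that $\{s^\alpha,x^I\}$ is genuinely a coordinate system, which is automatic from the flow-box construction, and that with these $\eta^\alpha$ the three defining conditions of Definition~\ref{def:kcontactform} are consistent — but that is guaranteed since we started from a given $k$-contact form and only changed coordinates.

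The only real subtlety, and the step I would treat most carefully, is the passage from commuting Reeb vector fields to simultaneous flow-box coordinates on a neighbourhood that also accommodates transverse coordinates: one invokes the standard result that $k$ pointwise independent, pairwise commuting vector fields can be simultaneously rectified, and then restricts to a slice transverse to $\ker d\bm\eta$ to supply the $x^I$. Everything after that is the routine contraction argument above; no obstruction arises because regularity and constant rank — part of Definition~\ref{def:kcontactform} and already exploited in Theorem~\ref{Th:Reeb} — ensure the relevant foliations are nice and the coframe $\{ds^\alpha, dx^I\}$ is well defined throughout the chart.
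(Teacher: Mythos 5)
Your proposal is correct and follows essentially the same route as the paper's own proof: simultaneously rectify the commuting Reeb vector fields, then use $\iota_{R_\gamma}\eta^\alpha=\delta^\alpha_\gamma$ to fix the $ds$-components and $\iota_{R_\gamma}d\eta^\alpha=0$ to kill the $s$-dependence of the remaining coefficients. You merely spell out the contraction computation that the paper leaves implicit.
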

\begin{proof}
    Since Reeb vector fields commute and are linearly independent at each point, there exist local coordinates $\{s^\alpha,x^I\}$ on an open $U\subset M$ that straighten Reeb vector fields simultaneously and $R_\alpha=\frac{\partial}{\partial s^\alpha}$ for $\alpha=1,\ldots,k$. Condition \ref{eq:Reeb1} in Theorem \ref{Th:Reeb} implies
    $$
    \eta^\alpha=ds^\alpha-f^\alpha_I(s,x)dx^I,\qquad\alpha=1,\ldots,k,
    $$
    for some functions $f^\alpha_I\in C^\infty(U)$.  Moreover, Condition \ref{eq:Reeb2} from Theorem \ref{Th:Reeb} yields $f^\alpha_I(s,x)=f^\alpha_I(x)$ for $\alpha=1,\ldots,k$ and every $I$.
\end{proof}

Contrary to the Reeb distribution $\ker d\bm \eta$, the {\it $k$-contact distribution}, namely $\ker{\bm \eta}$, is not integrable. Indeed, a distribution $D\subset TM$ defined by $D=\ker \bm\zeta$ for an $\mathbb{R}^k$-valued differential form $\bm\zeta\in\Omega^1(M,\R^m)$ of constant rank $m$ is integrable if and only if 
\begin{equation}\label{eq:DiffFormIntegrability}
    d\bm\zeta \lvert_{D\times D}(\cdot,\cdot)=-\bm\zeta([\cdot,\cdot])=0.
\end{equation}
But for a $k$-contact form $\bm\eta$, one has $\ker\bm\eta\cap\ker d\bm\eta=0$, which contradicts the above condition of integrability expressed via differential forms. However, we may look for submanifolds of $k$-contact manifolds whose tangent manifold is contained in the $k$-contact distribution. 

It turns out that under additional conditions imposed on the $k$-contact form, one may guarantee the existence of a more specific type of coordinates.
\begin{definition}
    Let $(M,\bEta)$ be a $k$-contact manifold of dimension $\dim M =n+kn+k$. A {\it polarization} of the $k$-contact distribution $\ker{\bm \eta}$ is an integrable subbundle $\mathcal{V}\subset \ker{\bm \eta}$ of rank $nk$. If such distribution exists, then $(M,\bm\eta,\mathcal{V)}$ is called a {\it polarized $k$-contact manifold}.
\end{definition}

\begin{theorem}\label{Th:DarbouxkContact}
    If $(M,\bm\eta,\mathcal{V})$ is a polarized $k$-contact manifold, then around every point of $M$ there exists, on an open neighbourhood $U$, a coordinate system $\{s^\alpha,q^i,p^\alpha_i\}$ , where $i=1,\ldots, n$ and $\alpha=1,\ldots,k$, such that
    $$
    \eta^\alpha\vert_U=ds^\alpha-p^\alpha_i dq^i,\qquad \alpha=1,\ldots,k.
    $$
\end{theorem}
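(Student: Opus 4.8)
The plan is to establish the Darboux-type coordinates in three stages: first use the polarization $\mathcal{V}$ to set up "position" coordinates $q^i$ along the leaves, then augment these with coordinates transverse to $\mathcal{V}$ inside $\ker\bm\eta$ and the Reeb directions to get a full chart, and finally show that in a suitably adapted version of such coordinates the $k$-contact form takes the claimed normal form $\eta^\alpha = ds^\alpha - p^\alpha_i\,dq^i$. Throughout, I would exploit the decomposition $TM = \ker\bm\eta \oplus \ker d\bm\eta$ from Theorem~\ref{Th:Reeb}, together with the fact that the Reeb vector fields $R_1,\dots,R_k$ commute and satisfy $\Ld_{R_\alpha}\bm\eta = 0$.

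**Step 1 (straightening the Reeb fields and the polarization).** Since $[R_\alpha,R_\beta]=0$ and the $R_\alpha$ are pointwise independent, the simultaneous straightening theorem gives coordinates in which $R_\alpha = \partial/\partial s^\alpha$, and as in the previous proposition $\eta^\alpha = ds^\alpha - f^\alpha_I(x)\,dx^I$ on the transverse coordinates $x^I$, $I = 1,\dots,n+nk$. Now $\mathcal{V}$ is an integrable rank-$nk$ subbundle of $\ker\bm\eta$; by the Frobenius theorem (and a Reeb-invariance argument showing $\mathcal{V}$ can be chosen so that $[R_\alpha,\Gamma(\mathcal{V})]\subset\Gamma(\mathcal{V})$, or else one works leafwise and then pushes the structure along the Reeb flows), one can refine the $x^I$ into two groups, say $q^i$ ($i=1,\dots,n$) and $u^a$ ($a=1,\dots,nk$), so that $\mathcal{V} = \langle \partial/\partial u^a\rangle$. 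Because $\mathcal{V}\subset\ker\bm\eta$, contracting $\eta^\alpha$ with $\partial/\partial u^a$ must vanish; since $\eta^\alpha$ already has no $ds^\beta$-cross terms beyond $ds^\alpha$, this forces $\eta^\alpha$ to have no $du^a$ component, i.e.\ $\eta^\alpha = ds^\alpha - p^\alpha_i(q,u)\,dq^i$ where I have renamed the remaining coefficient functions $p^\alpha_i$. One checks $p^\alpha_i$ cannot depend on $s^\beta$ by the Reeb condition $\iota_{R_\alpha}d\eta^\beta = 0$.

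**Step 2 (the $u^a$ are the $p^\alpha_i$).** It remains to see that the $nk$ functions $u^a$ can in fact be taken to be exactly the $nk$ coefficient functions $p^\alpha_i$ — this is the crux. Compute $d\eta^\alpha = -dp^\alpha_i\wedge dq^i$. The rank and nondegeneracy conditions ($\ker\bm\eta\cap\ker d\bm\eta = 0$, $\ker d\bm\eta$ of rank $k$, which is spanned by the $R_\alpha = \partial/\partial s^\alpha$) together force the $nk$ one-forms $\{dp^\alpha_i\}$ together with $\{dq^i\}$ to be pointwise independent and, modulo the $ds^\alpha$, to span the cotangent directions complementary to $\partial/\partial s^\alpha$: indeed if the $dp^\alpha_i$ were dependent on the $dq^j$ and each other, then $d\bm\eta$ would drop rank on $\ker\bm\eta$, contradicting Condition~\ref{eq:Contact3}. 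Hence $\{s^\alpha, q^i, p^\alpha_i\}$ is a legitimate coordinate system (the Jacobian from $\{s^\alpha,q^i,u^a\}$ is invertible since $dp^\alpha_i$ and $dq^i$ are independent and number $n+nk$, matching the count), and in these coordinates $\eta^\alpha = ds^\alpha - p^\alpha_i\,dq^i$ by construction. The dimension bookkeeping — $k$ coordinates $s^\alpha$, $n$ coordinates $q^i$, and $nk$ coordinates $p^\alpha_i$, totalling $n + nk + k = \dim M$ — is exactly what the hypothesis $\dim M = n + nk + k$ guarantees.

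**The main obstacle** I expect is Step 1's compatibility of the polarization with the Reeb flows: a priori the integrable distribution $\mathcal{V}\subset\ker\bm\eta$ need not be invariant under the $R_\alpha$, so one cannot blithely straighten everything at once. The fix is to first straighten the commuting Reeb fields, restrict attention to a transversal (a local leaf space of the Reeb foliation, of dimension $n+nk$) on which $\bm\eta$ descends in a controlled way, apply Frobenius to the pushed-down polarization there, and then pull the resulting $q^i, u^a$ back along the Reeb flow — using $\Ld_{R_\alpha}\eta^\beta = 0$ to guarantee the form of $\eta^\alpha$ is preserved. A secondary technical point is verifying the nondegeneracy/independence claims in Step 2 carefully from Conditions~\ref{eq:Contact1}--\ref{eq:Contact3}; this is linear algebra fibrewise but must be done with care to license the coordinate change $u^a \rightsquigarrow p^\alpha_i$.
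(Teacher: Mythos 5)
The paper does not actually prove Theorem \ref{Th:DarbouxkContact}: it defers entirely to \cite[Theorem 7.1.11]{Ri21}, so there is no in-text argument to compare against. Judged on its own, your proposal has the right skeleton, and Step 2 is sound: once one has a chart $\{s^\alpha,q^i,u^a\}$ with $R_\alpha=\partial/\partial s^\alpha$, $\mathcal{V}=\langle \partial/\partial u^a\rangle$ and $\eta^\alpha=ds^\alpha-p^\alpha_i(q,u)\,dq^i$, any tangent vector annihilated by all of $ds^\alpha$, $dq^i$, $dp^\alpha_j$ lies in $\ker\bm\eta\cap\ker d\bm\eta=0$, so these $k+n+nk=\dim M$ differentials form a coframe and $\{s^\alpha,q^i,p^\alpha_i\}$ is a legitimate chart in which $\bm\eta$ takes the stated normal form.

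The genuine gap is the one you flag in Step 1 but never close: producing a chart in which \emph{simultaneously} $R_\alpha=\partial/\partial s^\alpha$ and $\mathcal{V}=\langle\partial/\partial u^a\rangle$ requires the distribution $\ker d\bm\eta\oplus\mathcal{V}$ to be integrable, i.e.\ $[R_\alpha,\Gamma(\mathcal{V})]\subset\Gamma(\mathcal{V})$. The definition of a polarization only yields $[R_\alpha,X]\in\Gamma(\ker\bm\eta)$ for $X\in\Gamma(\mathcal{V})$ (from $\mathcal{L}_{R_\alpha}\bm\eta=0$); using $\mathcal{L}_{R_\alpha}d\eta^\beta=0$ one gets merely $d\eta^\beta([R_\alpha,X],Y)=-d\eta^\beta(X,[R_\alpha,Y])$ for $Y\in\Gamma(\mathcal{V})$, which does not place $[R_\alpha,X]$ back in $\mathcal{V}$. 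Your fallback, pushing $\mathcal{V}$ down to a local leaf space (or transversal) of the Reeb foliation and applying Frobenius there, is circular: the pushforward of $\mathcal{V}$ along that projection is well defined only if $\mathcal{V}$ is projectable, which is precisely the Reeb-invariance you are trying to establish. So you must either prove Reeb-invariance, replace $\mathcal{V}$ by a Reeb-invariant polarization constructed from it, or assume it; as written, Step 1 does not go through. A secondary omission: even granting integrability of $\ker d\bm\eta\oplus\mathcal{V}$, you should justify that the two complementary foliations $\ker d\bm\eta$ and $\mathcal{V}$ of each of its leaves can be straightened simultaneously (local product structure of complementary foliations) and then completed by leaf coordinates $q^i$; this standard step is what actually delivers the chart $\{s^\alpha,q^i,u^a\}$ that Step 2 takes as input.
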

The coordinates $\{s^\alpha,q^i,p^\alpha_i\}$ in Theorem \eqref{Th:DarbouxkContact} are called {\it $k$-contact Darboux coordinates} or simply Darboux coordinates if their $k$-contact nature is understood from context. The proof of Theorem  \ref{Th:DarbouxkContact} can be found in \cite[ Theorem 7.1.11, p. 100]{Ri21}.
\begin{example} (Canonical $k$-contact form) The manifold $M=\R^k\times \bigoplus_{\alpha=1}^k T^*Q$ admits a canonical $k$-contact form given by a family of one-forms
$$
\eta^\alpha=ds^\alpha-{\rm pr}_\alpha^*\theta,\qquad \alpha=1,\ldots,k,
$$
where $\theta$ is the Liouville form on $T^*Q$ and ${\rm pr}_\alpha$ denotes the  canonical projection from $M$ onto the $\alpha$-th factor of $T^*Q$ in $M$. The canonical symplectic Darboux coordinates on the cotangent bundle and the canonical coordinates on $\R^k$ induce $k$-contact Darboux coordinates $\{s^\alpha,p_i^\alpha,q^i\}$ on $\R^k\times \bigoplus_{\alpha=1}^k T^*Q$. As $d\eta^\alpha=dq^i\wedge dp^\alpha_i$ for $\alpha=1,\ldots,k$, the Reeb distribution is locally of the form
$$
\ker{d\bm \eta}=\left\langle\frac{\partial}{\partial s^1},\ldots,\frac{\partial}{\partial s^k}\right\rangle.
$$
Moreover, the $k$-contact distribution is locally of the form
$$
\ker{\bm \eta}=\left\langle p^\alpha_1\frac{\partial}{\partial s^\alpha}+\frac{\partial}{\partial q^1},\ldots,p^\alpha_n\frac{\partial}{\partial s^\alpha}+\frac{\partial}{\partial q^n}\right\rangle\oplus \mathcal{V},
$$
where
$$
\mathcal{V}=\left\langle\frac{\partial}{\partial p^\alpha_i}\right\rangle_{\substack{i=1,\ldots,n\\ \alpha=1,\ldots,k}}\subset \ker \bEta,
$$
is a polarization of the $k$-contact form $\bEta$. It is worth noting that there is no other larger integrable isotropic distribution containing $\mathcal{V}$ within $\ker \bEta$.
\end{example}

The interesting point of the polarization notion in the $k$-contact context is that it is an integrable isotropic distribution of maximal rank within the $k$-contact distribution $\ker\bEta$. 

\begin{note}
    In the case when $k=1$, the definition of a $k$-contact form is equivalent to the definition of a contact form. Therefore, all theorems on $k$-contact manifolds are also valid in the contact setting. Let us prove that the definition of the contact form is equivalent to the definition of the $k$-contact form for $k=1$.

    Let $(M,\eta)$ be a contact manifold, namely a $(2n+1)$-dimensional manifold $M$ endowed with a one-form $\eta$ such that $\eta\wedge (d\eta)^n$ is a volume form on $M$. Since $\eta\wedge (d\eta)^n$ is a volume form, $\eta$ is a non-vanishing form, so $\ker\eta$ is a regular distribution of rank $2n$. Moreover, the same condition yields that $\ker\eta\cap\ker d\eta=0$. Using the canonical form of a two-form (see \cite[Proposition 3.1.2]{AM78}), we see that the rank of the distribution $\ker d\eta$ is odd. As $\ker\eta\cap\ker d\eta=0$, the only possibility is that the rank of $\ker d\eta$ is equal to one, so $\ker\eta\oplus\ker d\eta=TM$.
    
    Let us prove the converse implication. Consider a 1-contact manifold $(M,\eta)$. Let us prove that $\eta$ is a contact form. By Definition \ref{def:kcontactform}, the distribution $\ker\eta$ has constant rank equal to $\dim M-1$. Moreover, $\ker d\eta$ has constant rank equal to $1$. Using the canonical form of a two-form (see \cite[Proposition 3.1.2]{AM78}) and $\ker\eta\cap\ker d\eta=0$, we conclude that $\dim M-1=2n$ for some $n\in \mathbb{N}$. Since $\ker{ \eta}\cap \ker{d \eta}=0$, it follows that $\ker\eta\wedge (d\eta)^{n}$ is not vanishing, so $\eta\wedge (d\eta)^n$ is a volume form on $M$. Thus, $\eta$ is a contact form.

\end{note}

\section{{\it k}-contact Hamiltonian systems}

$k$-Contact Hamiltonian systems allow us to describe the dynamics of dissipative field theories \cite{Ri21}. They were developed analogously to the Hamiltonian systems known from symplectic, contact, and $k$-symplectic geometry. To formulate $k$-contact Hamiltonian equations, we introduce $k$-vector fields.

Let us denote by
$ 
\pi:\oplus^kTM\longrightarrow M
$
the natural projection that endows the Whitney sum $\oplus^kTM$ with a structure of a vector  bundle over $M$ and let $
\pi_\alpha:\oplus^kTM\longrightarrow TM,$
be the natural projection onto the $\alpha$-th direct component.
\begin{definition}
    A {\it $k$-vector field} ${\bf X}$ is a section of the vector bundle $\oplus^kTM$ over  $M$.
\end{definition}
The vector space of $k$-vector fields on $M$ is denoted by $\mathfrak{X}^k(M)$. Each $k$-vector field ${\bf X}$ is uniquely defined by a family $(X_1,\ldots,X_k)$, with  $X_\alpha=\pi_\alpha\circ{\bf X}$ for $\alpha=1,\ldots,k$, and we may denote ${\bf X}=(X_1,\ldots, X_k)$, where $X_1,\ldots,X_k$ are usually called the {\it components} of $\bX$. Each $k$-vector field induces a distribution on $M$ spanned by its components. We define the inner product of a $k$-vector field with an $\R^k$-valued differential $p$-form as
\begin{align*}
    \iota:\mathfrak{X}^k(M)\times \Omega^p(M,\R^k)&\longrightarrow \Omega^{p-1}(M),\\
    ({\bf X},\omega^\alpha\otimes e_\alpha)&\longmapsto \iota_{\bf X}{\bm \omega}:=\iota_{X_\alpha}\omega^\alpha,
\end{align*}
The notion of the tangent lift of a curve in a manifold $M$ naturally extends to tangent lifts of mappings from $\mathbb{R}^k$ to $M$ as follows.
\begin{definition}
    The {\it first prolongation} of a map $\psi:\R^k\longrightarrow M$ takes the form $\psi':\R^k\longrightarrow \bigoplus^kTM,$ with
    \begin{equation}
        \label{eq:prolongation}\psi'(t^1,\ldots,t^k)=\left(T_{(t_1,\ldots,t_k)}\psi\left(\frac{\partial}{\partial t^1}\right),\ldots,T_{(t_1,\ldots,t_k)}\psi\left(\frac{\partial}{\partial t^k}\right)\right),
    \end{equation}
    where $\{t^1,\ldots,t^k\}$ are the canonical coordinates on $\R^k$.
\end{definition}
\begin{definition}
    An {\it integral section} of a $k$-vector field $\bf X$ is a map $\psi:\R^k\longrightarrow M$ that satisfies 
    $$
    \psi'={\bf X}\circ \psi.
    $$
    A $k$-vector field $\bX$ is {\it integrable} if any point of $M$ lies in the image of an integral section of $\bX$.
\end{definition}
\begin{proposition}
    A $k$-vector field $\bX=(X_1,\ldots,X_k)$ is integrable if and only if 
    \begin{equation}\label{eq:IntegrabilityCondition}
        [X_\alpha,X_\beta]=0,\qquad\alpha,\beta=1,\ldots,k.
    \end{equation}
\end{proposition}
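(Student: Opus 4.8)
The plan is to prove the two implications separately, in both cases passing through the relation between $\bX=(X_1,\ldots,X_k)$ and the flows of its components.

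For necessity, suppose $\bX$ is integrable and let $\psi:\R^k\to M$ be an integral section, so that $\psi'=\bX\circ\psi$. Reading this off \eqref{eq:prolongation}, it says precisely that for every $t$ and every $\alpha$ one has $T_t\psi\!\left(\partial/\partial t^\alpha\right)=X_\alpha(\psi(t))$, i.e. each coordinate field $\partial/\partial t^\alpha$ on $\R^k$ is $\psi$-related to $X_\alpha$. I would then invoke the standard naturality of the Lie bracket: if $Z_i$ is $\psi$-related to $W_i$ for $i=1,2$, then $[Z_1,Z_2]$ is $\psi$-related to $[W_1,W_2]$. Applied here with $[\partial/\partial t^\alpha,\partial/\partial t^\beta]=0$, it gives $[X_\alpha,X_\beta](\psi(t))=T_t\psi(0)=0$ for all $t$. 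Since every point of $M$ lies in the image of some integral section, \eqref{eq:IntegrabilityCondition} follows on all of $M$. Note that this step needs no rank or immersion hypothesis on $\psi$.

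For sufficiency, assume \eqref{eq:IntegrabilityCondition} and fix $x_0\in M$; let $\phi^\alpha$ denote the local flow of $X_\alpha$. First I would recall that pairwise-commuting vector fields have commuting flows: from $\frac{d}{dr}\left(\phi^\alpha_{-r}\right)_*X_\beta=\left(\phi^\alpha_{-r}\right)_*[X_\alpha,X_\beta]=0$ one obtains $\left(\phi^\alpha_{r}\right)_*X_\beta=X_\beta$, and conjugating the flow of $X_\beta$ by $\phi^\alpha_r$ yields $\phi^\alpha_r\circ\phi^\beta_s=\phi^\beta_s\circ\phi^\alpha_r$ wherever both sides are defined. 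Then, on a neighbourhood of the origin in $\R^k$, I would set
$$
\psi(t^1,\ldots,t^k)=\left(\phi^1_{t^1}\circ\cdots\circ\phi^k_{t^k}\right)(x_0),
$$
and verify $\psi'=\bX\circ\psi$: for each fixed $\alpha$, the commutativity of the flows lets one rewrite $\psi(t)=\phi^\alpha_{t^\alpha}\!\left(\Phi_\alpha(t)\right)$ with $\Phi_\alpha(t)$ independent of $t^\alpha$, so that $T_t\psi\!\left(\partial/\partial t^\alpha\right)=X_\alpha(\psi(t))$. Hence $\psi$ is an integral section with $\psi(0)=x_0$ in its image, and since $x_0$ was arbitrary, $\bX$ is integrable. (If one reads the definition of an integral section strictly as a map defined on all of $\R^k$, one observes that only the germ at the origin is relevant for integrability, so the local construction suffices; alternatively one rescales the $t^\alpha$.)

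The main obstacle is the sufficiency direction, and more precisely the passage ``commuting brackets $\Rightarrow$ commuting flows $\Rightarrow$ the iterated-flow map is an integral section'': the first implication is the usual Lie-theoretic computation recalled above, while the second requires using those commutations to move each $\phi^\alpha$ to the outside of the composition so that differentiating in $t^\alpha$ reproduces $X_\alpha$. A Frobenius-type argument on the distribution spanned by $X_1,\ldots,X_k$ would be less convenient, since the $X_\alpha$ need not be pointwise independent and one needs the explicit parametrisation $\psi$, not merely an integral submanifold.
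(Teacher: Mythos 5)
Your proof is correct. The necessity direction is essentially identical to the paper's: both read off from the definition of the first prolongation that each $\partial/\partial t^\alpha$ is $\psi$-related to $X_\alpha$ and then invoke the naturality of the Lie bracket under $\psi$-related vector fields, quantifying over all integral sections to cover all of $M$. Where you genuinely diverge is in the sufficiency direction. The paper disposes of it in one clause, ``it is also sufficient by the Fr\"obenius theorem,'' whereas you give the full argument: commuting brackets imply commuting local flows, and the map $\psi(t)=\bigl(\phi^1_{t^1}\circ\cdots\circ\phi^k_{t^k}\bigr)(x_0)$ is then an integral section because each $\phi^\alpha_{t^\alpha}$ can be commuted to the outermost position before differentiating in $t^\alpha$. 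Your version buys two things the paper's citation glosses over: first, the classical Frobenius theorem for distributions requires constant rank, and the distribution spanned by $X_1,\ldots,X_k$ need not have it (the $X_\alpha$ may be pointwise dependent or vanish), a point you rightly flag; second, an integral section is a specific parametrised map from $\R^k$, not merely an integral submanifold, so one needs the explicit $\psi$ anyway. Your parenthetical about the local-versus-global domain of $\psi$ is also a real issue (non-complete flows only give $\psi$ near the origin), and it affects the paper's argument equally; your remark that only the germ matters, or that one can reparametrise, is an acceptable resolution. In short: same necessity argument, a more careful and self-contained sufficiency argument.
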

\begin{proof}
    If $\bX$ is integrable and $\psi$ is an integral section of $\bX$, then equation \eqref{eq:prolongation} yields that each vector field $X_\alpha\in\mathfrak{X}(M)$ is $\psi$-related to the vector field $\frac{\partial}{\partial t^\alpha}\in\mathfrak{X}(\R^k)$. Therefore, $\left[\frac{\partial}{\partial t^\alpha},\frac{\partial}{\partial t^\beta}\right]=0$ and this implies 
    $$T\psi\left(\left[\frac{\partial}{\partial t^\alpha},\frac{\partial}{\partial t^\beta}\right]\right)=\left[T\psi\left(\frac{\partial}{\partial t^\alpha}\right),T\psi \left(\frac{\partial}{\partial t^\beta}\right)\right]=[X_\alpha|_{\psi}, X_\beta|_{\psi}]=0.$$ 
    Considering the above for all possible $\psi$, we obtain that condition \eqref{eq:IntegrabilityCondition} is necessary. It is also sufficient by the Fröbenius theorem.
\end{proof}

It is worth noting that ${\bf X}$ may have some integral sections despite not being integrable at every point. Let us now formulate the fundamental theory of $k$-contact Hamiltonian equations.

\begin{definition}
    A {\it $k$-contact Hamiltonian system} is a triple $(M,{\bm \eta},H)$, where $(M,{\bm \eta})$ is a $k$-contact manifold and $H\in C^\infty(M)$. The {\it geometric $k$-contact Hamilton-de Donder-Weyl (HdDW) equations} for $(M,\bm \eta,H)$ read
    \begin{equation}\label{eq:HdDW}
     \iota_{\bX}d{\bm\eta}=dH-(\mathcal{L}_{R_\alpha}H)\eta^\alpha, \qquad
         \iota_{\bX}{\bm\eta}=-H.
    \end{equation}
    Meanwhile, the {\it $k$-contact Hamilton-de Donder-Weyl (HdDW) equations} for $(M,\bm \eta,H)$ are the system of PDEs for the integral curves $\psi:\mathbb{R}^k\rightarrow M$ of $\bX$, namely
    \begin{equation}\label{eq:HdDW2}
     \iota_{\psi'}d{\bm\eta}=d(H\circ \psi)-[(\mathcal{L}_{R_\alpha}H)\circ\psi ]\eta^\alpha\circ \psi', \qquad
         \iota_{\psi'}{\bm\eta}=-H\circ \psi.
    \end{equation}
\end{definition}

It is worth stressing that geometric HdDW equations are equations on the unknown $\bX$. Meanwhile, HdDW equations are systems of partial differential equations on the integral sections of $\bX$, whose existence depend on the integrability properties of $\bX$. In a polarized $k$-contact manifold, $k$-contact Darboux coordinates allow us to write that HdDW equations  become
\begin{equation}\label{eq:k-contact-HDW-Darboux-coordinates} 
        \frac{\partial \psi^i}{\partial t^\alpha} = \frac{\partial h}{\partial p_i^\alpha}\circ\psi\,,\qquad
        \frac{\partial \psi^\alpha_i}{\partial t^\alpha} = -\left( \frac{\partial h}{\partial q^i} + p_i^\alpha\frac{\partial h}{\partial z^\alpha} \right)\circ\psi\,,\qquad
        \frac{\partial \psi^\alpha}{\partial t^\alpha} = \left( p_i^\alpha\frac{\partial h}{\partial p_i^\alpha} - h \right)\circ\psi\,,
\end{equation}
where $\psi^i=q^i\circ \psi$, $\psi^\alpha_i=p^\alpha_i\circ \psi$ and $\psi^\alpha=s^\alpha\circ \psi$.

Now, the following statement shows that $k$-contact Hamiltonian dynamics is a suitable framework for describing relativistic hydrodynamics, as it naturally incorporates pseudo-gauge degrees of freedom.
\begin{theorem}\label{thm:HdDWNonUniqueness}
    Geometric $k$-contact Hamilton-de Donder-Weyl  equations for a $k$-contact manifold $(M,\bEta,H)$ always admit solutions, which are unique only when $k=1$. In such a case, the $k$-contact Hamiltonian-de Donder-Weyl equations have also solutions.
\end{theorem}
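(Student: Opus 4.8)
The plan is to exploit the splitting $TM=\ker\bEta\oplus\ker d\bEta$ (valid since $\ker\bEta$ has corank $k$, $\ker d\bEta$ has rank $k$, and the two meet only at $0$), together with the Reeb vector fields $R_1,\ldots,R_k$ of Theorem~\ref{Th:Reeb}, which span $\ker d\bEta$. Writing the unknown $k$-vector field as $\bX=(X_1,\ldots,X_k)$ with $X_\alpha=Y_\alpha+g_\alpha^\beta R_\beta$, where $Y_\alpha\in\Gamma(\ker\bEta)$ and $g_\alpha^\beta\in C^\infty(M)$, the identities $\iota_{R_\beta}d\eta^\alpha=0$ and $\iota_{R_\beta}\eta^\alpha=\delta_\beta^\alpha$ of Theorem~\ref{Th:Reeb} make the first equation in \eqref{eq:HdDW} depend only on $(Y_1,\ldots,Y_k)$, namely $\sum_\alpha\iota_{Y_\alpha}d\eta^\alpha=\gamma$ with $\gamma:=dH-(\Ld_{R_\alpha}H)\eta^\alpha$, and collapse the second equation to the single scalar condition $\sum_\alpha g_\alpha^\alpha=-H$ (the $Y_\alpha$-contribution to $\iota_{\bX}\bEta$ vanishing). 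So the system decouples into a \emph{contact part} for the $Y_\alpha$ and a \emph{Reeb part} for the matrix $(g_\alpha^\beta)$, and the genuine content sits in the contact part.

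For the contact part the key lemma is that the vector-bundle morphism $\Phi\colon\bigoplus^k\ker\bEta\to(\ker d\bEta)^\circ\subset T^*M$, $(Y_\alpha)\mapsto\sum_\alpha\iota_{Y_\alpha}d\eta^\alpha$, is well defined and surjective with constant-rank kernel. Well-definedness is immediate from $\iota_{R_\beta}d\eta^\alpha=0$. For surjectivity I would compute the annihilator of the image of the analogous morphism defined on $\bigoplus^kTM$: a tangent vector $v$ annihilating every $\sum_\alpha\iota_{w_\alpha}d\eta^\alpha$ must, on letting all $w_\alpha$ but one vanish, satisfy $\iota_vd\eta^\alpha=0$ for each $\alpha$, i.e.\ $v\in\ker d\bEta$; hence the image equals $(\ker d\bEta)^\circ$, and it is unchanged when the arguments are restricted to $\ker\bEta$ (the Reeb components drop out of each $\iota_{X_\alpha}d\eta^\alpha$). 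Since $\gamma(R_\beta)=\Ld_{R_\beta}H-(\Ld_{R_\alpha}H)\delta_\beta^\alpha=0$, one has $\gamma\in(\ker d\bEta)^\circ$, so a smooth splitting of $\Phi$ (e.g.\ via a Riemannian metric on $M$, the kernel being a smooth subbundle of constant rank) furnishes a global smooth $(Y_\alpha)$ with $\Phi(Y_\alpha)=\gamma$; the choice $g_\alpha^\beta=-(H/k)\delta_\alpha^\beta$ settles the Reeb part. This proves existence for every $k$.

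For uniqueness, when $k=1$ the kernel of $\Phi$ is $\ker\eta\cap\ker d\eta=0$, so $\Phi$ is an isomorphism between bundles of equal rank $\dim M-1$; then $Y_1=\Phi^{-1}(\gamma)$ is forced and $g^1_1=-H$ is forced, so the solution $X=\Phi^{-1}(\gamma)-HR$ is the only one. When $k\ge2$, the $k$-vector field $(R_1,-R_2,0,\ldots,0)$ solves the homogeneous equations, since $\iota_{R_1}d\eta^1-\iota_{R_2}d\eta^2=0$ and $\iota_{R_1}\eta^1-\iota_{R_2}\eta^2=1-1=0$, and may be added to any solution to produce a different one; hence solutions are unique precisely when $k=1$. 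Finally, for $k=1$ a $1$-vector field is an ordinary vector field, and the integrability condition \eqref{eq:IntegrabilityCondition} reads $[X_1,X_1]=0$, which holds automatically; therefore the unique $\bX$ is integrable, its integral curves exist by the standard flow theorem for ODEs, and these are exactly the solutions of the $k$-contact HdDW equations \eqref{eq:HdDW2}.

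The step I expect to be the crux is the surjectivity lemma for $\Phi$: identifying its image with $(\ker d\bEta)^\circ$ fibrewise, and hence that the kernel has constant rank, which is what allows a global smooth right inverse. The remaining points — that $\iota_{\bX}\bEta=-H$ imposes only the trace condition $\sum_\alpha g_\alpha^\alpha=-H$, and that the $k=1$ vector field is automatically integrable — are routine bookkeeping and fibrewise linear algebra.
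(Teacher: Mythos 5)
Your proposal is correct and follows essentially the same route as the paper: split $\bX$ into its $\ker\bEta$ and $\ker d\bEta$ components, show the induced bundle map into $(\ker d\bEta)^\circ$ is surjective (your annihilator computation is just a dual phrasing of the paper's argument by contradiction), observe that $dH-(\Ld_{R_\alpha}H)\eta^\alpha$ annihilates the Reeb fields, and conclude existence, with uniqueness holding exactly for $k=1$ and ODE theory giving integral curves there. The only cosmetic difference is that you exhibit an explicit homogeneous solution $(R_1,-R_2,0,\ldots,0)$ for $k\geq 2$, whereas the paper counts the ranks of the kernels of $\rho_1$ and $\rho_2$; both settle non-uniqueness.
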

\begin{proof}
    Any $k$-vector field $\bX$ on $M$ can be decomposed uniquely into $\bX=\bX^C+\bX^R$ for $\bX^C\in \Gamma(\ker\bm \eta)$ and $\bX^R\in \Gamma(\ker d\bm\eta)$.  Hence, one can write the geometric $k$-contact HdDW equations for $(M,\bEta,H)$ as 
    \begin{equation*}
\iota_{\bX^C}d{\bm\eta}=dH-(\mathcal{L}_{R_\alpha}H)\eta^\alpha, \qquad \iota_{\bX^R}{\bm\eta}=-H.
    \end{equation*}
    Since $dH-(\mathcal{L}_{R_\alpha}H)\eta^\alpha$ takes values in $(\ker d\bEta)^\circ$, geometric $k$-contact HdDW equations admit solutions for all $H\in C^\infty(M)$ if and only if vector bundle maps
    \begin{align*}
        \rho_1:\oplus^k\ker \bm \eta &\longrightarrow (\ker{d\bm \eta})^\circ,\\
        \bX^C&\longmapsto\iota_{\bX^C}d\bEta,
    \end{align*}    \begin{align*}
        \rho_2:\oplus^k \ker{d\bm \eta}&\longrightarrow M\times\R,\\
        \bX^R&\longmapsto\iota_{\bX^R}\bEta,
    \end{align*}
are surjective. Clearly, $\rho_2$ is surjective. Let us prove that $\rho_1$ is also surjective by reduction to absurd. If $\rho_1$ were not surjective, then the codistribution spanned by the $ d\eta^\alpha(w)$ with $w\in \ker\bEta$ and $\alpha=1,\ldots,k$, is strictly contained in $(\ker{d\bm \eta})^
\circ $ at least at a point $x\in M$. Hence, there exists a tangent vector $0\neq v\in \ker{\bm \eta}_x$ satisfying that $(d\eta^\alpha)_x(v,\cdot)=0$ for $\alpha=1,\ldots,k$ on $\ker \bm \eta_x$. Since $(d\eta^\alpha)_x(v,R_i)=0$ for $i=1,\ldots,k$, it follows that $v\in \ker d\bEta_x\cap \ker \bEta_x$. Thus, $\bm \eta$ is not a $k$-contact form. This is a contradiction and $\rho_1$ is surjective.

    Now, note also that 
    $$
    \rk\ker\rho_1=\rk(\oplus^k \ker{\bm \eta})-\rk\im\,\rho_1=(k-1)m,
\qquad    \rk\ker\rho_2=\rk(\oplus^k \ker{d\bm \eta})-\rk\im\rho_2=k^2-1,
    $$
    where we assume $\dim M=k+m$. Hence, solutions to geometric $k$-contact HdDW equations are unique only for ordinary contact manifolds. Since these are ordinary differential equations, they also admit solutions.
\end{proof}

It is worth noting that the $k$-vector fields satisfying the geometric $k$-contact HdDW equations do not need to be integrable. Hence, the determination of existence of solutions to their $k$-contact HdDW equations is an involved task in general.

\section{Legendrian submanifolds of {\it k}-contact manifolds}
In this section, we recall the notion of isotropic and Legendrian spaces and submanifolds of co-oriented $k$-contact manifolds, introduced in \cite{LRS24}. We prove that such Legendrian spaces are maximal with respect to inclusion in isotropic spaces. Moreover, we construct examples of Legendrian submanifolds of different dimensions for the same $k$-contact manifold, introducing parametrizing $k$-functions for Legendrian submanifold of $k$-contact manifolds, which are related to generating functions of Legendrian submanifold in contact geometry for $k=1$ (see \cite[p. 367]{Ar89}). These results will play an important role in the $k$-contact formulation of relativistic hydrodynamic-like theory in further sections. 

\begin{definition} 
  The {\it $k$-contact orthogonal} of a vector subspace $E_x\subset \ker \bm \eta_x$, for $x\in M$ and a $k$-contact manifold $(M,\bm \eta)$, is the vector subspace
    $$
    E_x^{\perp_{\bm\eta}}:=\{v_x\in \ker \bm \eta_x\;\vert\; d\bm\eta(v_x,w_x)=0,\,\forall w_x\in E_x\}.
    $$
    Then, $E_x$ is {\it isotropic} if $E_x\subset E_x^{\perp_{\bm\eta}}$. Meanwhile, $E_x$ is {\it Legendrian} if $E_x$ is isotropic and it admits a  complement $F_x\oplus E_x=T_xM$ such that $d\bm\eta\vert_{F_x\times F_x}=0$. A distribution $E\subset TM$ is {\it isotropic (resp. Legendrian)} if  every $E_x$, with $x\in M$, is isotropic (resp. Legendrian).
\end{definition}

It can be proved that $E_x$ is Legendrian if and only if it admits a complement $W_x\subset \ker \bEta$ such that $E_x\oplus W_x=\ker \bEta$ and $W_x$ is isotropic relative to $\bEta$. Indeed, if $W_x$ exists, then $E_x\oplus(W_x\oplus\ker d\bEta_x)=T_xM$ and $F_x=W_x\oplus \ker d\bEta_x$ satisfies that $d\bEta|_{F_x\times F_x}=0$. On the other hand, if $E_x\oplus F_x=T_xM$ and $F_x$ is such that $d\bEta|_{F_x\times F_x}=0$, then one has that $F_x+\ker \bEta_x\supset F_x\oplus E_x=T_xM$. Moreover,
\begin{align*}
    &\dim F_x\cap\ker \bEta_x=\dim F_x+\dim \ker \bEta_x-\dim (F_x+\ker \bEta_x)=\\
    &\dim M-\dim E_x+\dim \ker \bEta_x-\dim M=\dim \ker\bEta_x-\dim E_x.
\end{align*}Hence, $W_x=F_x\cap \ker \bEta_x$ is a complement to $E_x$ in $\ker\bEta_x$ and it is isotropic relative to $\bEta$ because it is contained on a space, $F_x$, where $d\bEta|_{F_x\times F_x}=0$. 
The latter shows that, instead of the original definition in \cite{LRS24}, one can define Legendrian subspaces using only the orthogonal relative to $\bEta$.

Every isotropic distribution is integrable by condition \eqref{eq:DiffFormIntegrability}. Note also that the notion of being $k$-contact orthogonal does not really depend on the $k$-contact form $\bm \eta$ as long as their kernel is the same \cite{LRS24}. 
\begin{definition}
    A submanifold $N$ of a $k$-contact manifold $(M,\bm\eta)$ is {\it isotropic (resp. Legendrian)} if, for every $x\in N$, the tangent space $T_xN$ is isotropic (resp. Legendrian). 
\end{definition}
It is worth stressing that $k$-contact  manifolds may admit Legendrian submanifolds of different dimensions. However, all Legendrian submanifolds are maximal relative to their inclusion in other isotropic submanifolds as proved next. 

\begin{theorem}\label{Th:MaxLeg}
    If $L$ is a Legendrian submanifold of a $k$-contact manifold $(M,\bm\eta)$, then there exists no isotropic submanifold $L'$ such that $L
    \subset L'$ and $\dim L'>\dim L$.
\end{theorem}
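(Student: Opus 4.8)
The plan is to reduce the statement to a pointwise linear-algebra fact and then exploit the equivalent description of Legendrian subspaces discussed just before the theorem. Suppose $L'$ is an isotropic submanifold with $L\subset L'$. Since $L$ is then a submanifold of $L'$, one has $T_xL\subseteq T_xL'$ for every $x\in L$, and both tangent spaces lie inside $\ker\bm\eta_x$ because isotropic subspaces are by definition contained in the $k$-contact distribution. As the dimension of a submanifold can be read off from any one of its tangent spaces, it suffices to prove that $T_xL'=T_xL$ for a single $x\in L$; this already contradicts $\dim L'>\dim L$.

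So I fix $x\in L$ and write $E_x:=T_xL$, which is Legendrian. By the characterization established before the theorem, $E_x$ admits an isotropic complement $W_x\subset\ker\bm\eta_x$ with $E_x\oplus W_x=\ker\bm\eta_x$. Take any $v\in T_xL'$ and decompose it accordingly as $v=e+w$ with $e\in E_x$ and $w\in W_x$. Since $E_x\subseteq T_xL'$, we have $e\in T_xL'$ and hence $w=v-e\in T_xL'\cap W_x$. The main step is to show that $w\in\ker d\bm\eta_x$. Because $T_xL'$ is isotropic and $w\in T_xL'$, we get $d\bm\eta_x(w,\cdot)=0$ on $T_xL'$, in particular on $E_x$; because $W_x$ is isotropic and $w\in W_x$, we get $d\bm\eta_x(w,\cdot)=0$ on $W_x$; therefore $d\bm\eta_x(w,\cdot)=0$ on $E_x\oplus W_x=\ker\bm\eta_x$. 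Finally, \eqref{eq:Reeb2} in Theorem \ref{Th:Reeb} gives $d\bm\eta_x(w,R_\alpha)=0$ for the Reeb vector fields $R_1,\dots,R_k$, which span $\ker d\bm\eta_x$; since $\ker\bm\eta_x\oplus\ker d\bm\eta_x=T_xM$, this yields $d\bm\eta_x(w,\cdot)=0$ on all of $T_xM$, i.e. $w\in\ker d\bm\eta_x$.

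Now the $k$-contact condition $\ker\bm\eta_x\cap\ker d\bm\eta_x=0$ forces $w=0$, so $v=e\in E_x$. Since $v\in T_xL'$ was arbitrary, $T_xL'\subseteq E_x=T_xL\subseteq T_xL'$, hence $T_xL'=T_xL$ and $\dim L'=\dim L$, contradicting the hypothesis. The only mildly delicate point in this argument is promoting the vanishing of $\iota_w d\bm\eta$ from $\ker\bm\eta_x$ to the whole tangent space $T_xM$, but this is immediate from the defining property \eqref{eq:Reeb2} of the Reeb vector fields; everything else is packaged into the equivalent characterization of Legendrian subspaces via isotropic complements inside $\ker\bm\eta$ together with the non-degeneracy relation $\ker\bm\eta\cap\ker d\bm\eta=0$.
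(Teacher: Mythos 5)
Your proof is correct and follows essentially the same route as the paper's: both work pointwise with the isotropic complement $W_x$ of $T_xL$ in $\ker\bm\eta_x$ and derive a contradiction from a nonzero vector of $T_xL'\cap W_x$ lying in $\ker\bm\eta_x\cap\ker d\bm\eta_x=0$. Your version is slightly more explicit in one useful place, namely in promoting $\iota_w d\bm\eta=0$ from $\ker\bm\eta_x$ to all of $T_xM$ via the Reeb vector fields, a step the paper leaves implicit when it asserts $T_xL^{\perp_{\bm\eta}}\cap W_x^{\perp_{\bm\eta}}=0$.
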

\begin{proof} Let us prove this by contradiction. Assume that $L$ is a Legendrian submanifold and $L'$ is an isotropic submanifold with $L\subset L'$. For $x\in L$, there exists a subspace $W_x\subset \ker \bEta_x$ so that
    $$
    T_xL\oplus W_x=\ker\bEta_x.
    $$
    As $L$ is Legendrian, $W_x$ can be chosen to be isotropic relative to $\bEta$. Since $\dim T_xL'\geq\dim T_xL$ by assumption, $T_xL'\cap W_x\neq 0$. Moreover, $d\bEta\vert_{W_x\times W_x}=0$ implies that $T_xL'\cap W_x\subset W_x^{\perp_\bEta}$. Since $L\subset L'$ and $L'$ is isotropic, one has that  $d\bEta\vert_{T_xL'\times T_xL'}=0$ yields that $d\bEta\vert_{T_xL\times T_xL'}=0$, what can be rewritten as $T_xL'\subset T_xL^{\perp_\bEta}$. Then, $0\neq T_xL'\cap W_x\subseteq T_xL^{\perp_\bEta}\cap W_x^{\perp_\bEta}=0$. This is a contradiction, and therefore $L\subset L'$ with $\dim L<\dim L'$ is not possible.
\end{proof}

Note that it is an immediate consequence of Theorem \ref{Th:MaxLeg} that a Legendrian subspace is maximal relative to isotropic subspaces.

The following proposition, namely Proposition \ref{prop:GeneratingFunction}, is related to the generalization of the notion of generating function of a Legendrian submanifold from contact to the $k$-contact setting. Moreover, it shows that while Legendrian submanifolds of $(2n+1)$-dimensional contact manifolds are always of dimension $n$, one has that $k$-contact manifolds admit Legendrian submanifolds of different dimensions.

In the formulation and the proof of the following theorem, we do not assume the Einstein summation convention to clarify the meaning of the ranges of the sums over indices.  

\begin{proposition}\label{prop:GeneratingFunction}
    Let $(M,\bEta,\mathcal{V})$ be a polarized $k$-contact manifold with $\dim M=k+nk+n$. Let $\{s^\alpha,q^l,p^\alpha_l\}$ be Darboux coordinates defined on an open subset $U\subset M$. Consider a disjoint partition $I \sqcup J$ of the set of indices $\{1,\ldots,n\}$ and  a set of functions
    \begin{equation}\label{eq:LegrendrePara}
    F^\alpha=F^\alpha(q^j, p^\alpha_i),\qquad \alpha=1,\ldots,k,\quad i\in I,\quad j\in J.
    \end{equation}
    Denote by $|I|=n_1$ and $|J|=n_2$ the cardinalities of the sets $I$ and $J$.    If
    \begin{equation}\label{eq:GeneratingFunctionCondition}
        \frac{\partial F^\alpha}{\partial p_i^\alpha} \equiv \frac{\partial F^\beta}{\partial p_i^\beta} ,\qquad  i \in I, \quad   \alpha, \beta =1, \dots, k,
    \end{equation}
    then the parametrization 
    $
    \varphi:\R^{n_2+kn_1}\longrightarrow U
    $
    given by 
    \begin{equation}\label{eq:LegendreParametrisation}
        (q^j, p_i^\alpha) \longmapsto \left(F^\alpha - \sum_{i \in I} p_i^\alpha \frac{\partial F^\alpha}{\partial p_i^\alpha},q^j, -\frac{\partial F^\alpha}{\partial p_i^\alpha},\frac{\partial F^\alpha}{\partial q^j},p_i^\alpha\right)
    \end{equation}
    defines a Legendrian submanifold $L\subset M$ of dimension $\dim L=n_2+kn_1=n+(k-1)n_1$.
\end{proposition}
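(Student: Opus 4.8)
The plan is to verify directly that the image $L = \varphi(\R^{n_2 + kn_1})$ is a submanifold whose tangent space at each point is Legendrian, by computing the pullbacks $\varphi^*\eta^\alpha$ and $\varphi^*d\eta^\alpha$ and exhibiting an explicit isotropic complement. First I would note that $\varphi$ is an immersion: among the components of \eqref{eq:LegendreParametrisation} the coordinates $q^j$ and $p_i^\alpha$ themselves appear, so the Jacobian has a maximal-rank block, hence $L$ is an (immersed) submanifold of dimension $n_2 + kn_1$; by hypothesis \eqref{eq:GeneratingFunctionCondition} the function $-\partial F^\alpha/\partial p_i^\alpha =: -G_i$ is independent of $\alpha$, which is what makes the assignment $p_i^\alpha \mapsto -\partial F^\alpha / \partial p_i^\alpha$ consistent (one value of the $q^i$-coordinate for $i\in I$, not $k$ of them). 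This is exactly where the compatibility condition enters, and I expect the bookkeeping around it to be the main subtlety: one must keep careful track of which index set ($I$ or $J$) each coordinate direction belongs to and not conflate the pullback of $dq^i$ for $i\in I$ (where $q^i = -G_i$ is a dependent coordinate) with $dq^j$ for $j\in J$ (a free parameter).

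Next I would compute $\varphi^*\eta^\alpha = \varphi^*(ds^\alpha - \sum_l p^\alpha_l dq^l)$. Writing $s^\alpha \circ \varphi = F^\alpha - \sum_{i\in I} p_i^\alpha G_i$ with $G_i = \partial F^\alpha/\partial p^\alpha_i$, and $q^i\circ\varphi = -G_i$ for $i\in I$ while $q^j\circ\varphi = q^j$ for $j\in J$, and $p^\alpha_i\circ\varphi = p_i^\alpha$ for $i\in I$ while $p^\alpha_j \circ\varphi = \partial F^\alpha/\partial q^j$ for $j\in J$, I would take exterior derivatives. The differential $d(s^\alpha\circ\varphi)$ expands by the product rule; the terms $\sum_j (\partial F^\alpha/\partial q^j)\,dq^j$ cancel against $\varphi^*\big(\sum_{j\in J} p^\alpha_j dq^j\big)$, and the terms involving $d p_i^\alpha$ and $d G_i$ cancel against $\varphi^*\big(\sum_{i\in I} p_i^\alpha dq^i\big) = -\sum_{i\in I} p_i^\alpha dG_i$, leaving $\varphi^*\eta^\alpha = 0$. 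Hence $TL \subset \ker\bEta$. Then $\varphi^* d\eta^\alpha = d(\varphi^*\eta^\alpha) = 0$, so $TL$ is automatically isotropic — being contained in $\ker\bEta$ and with $d\eta^\alpha$ pulling back to zero.

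It remains to show $TL$ is not merely isotropic but Legendrian, i.e. (using the equivalent characterization established right after the definition) that each $T_xL$ admits an isotropic complement $W_x$ inside $\ker\bEta_x$ with $T_xL \oplus W_x = \ker\bEta_x$. Here I would exploit the polarization: in Darboux coordinates $\ker\bEta$ is spanned by the $n$ vector fields $V_l := p^\alpha_l \partial_{s^\alpha} + \partial_{q^l}$ together with $\mathcal V = \langle \partial_{p^\alpha_l}\rangle$, so $\dim\ker\bEta_x = n + kn$, and $\dim L = n_2 + kn_1$, giving codimension $n_1 + k n_2$ of $L$ in $\ker\bEta$ — consistent with needing $n_1$ of the $V_l$ (those with $l\in I$) and $kn_2$ of the $\partial_{p^\alpha_j}$ (those with $j\in J$) to complete a frame. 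I would propose $W_x = \langle V_i : i\in I\rangle \oplus \langle \partial_{p^\alpha_j} : j\in J,\ \alpha=1,\dots,k\rangle$ at points of $L$, check via $d\eta^\alpha = \sum_l dq^l\wedge dp^\alpha_l$ that this $W_x$ is isotropic relative to $\bEta$ (the $V_i$ pair only with $\partial_{p^\alpha_i}$, $i\in I$, which are absent from $W_x$; the $\partial_{p^\alpha_j}$ pair only with $\partial_{q^j}$-type directions, also absent), and confirm $T_xL \oplus W_x = \ker\bEta_x$ by a dimension count plus a transversality check reading off the explicit tangent vectors of $L$ from \eqref{eq:LegendreParametrisation}. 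Finally I would record the dimension identity $n_2 + kn_1 = (n - n_1) + kn_1 = n + (k-1)n_1$, which for $n_1$ ranging over $0,\dots,n$ exhibits Legendrian submanifolds of the distinct dimensions $n, n+(k-1), \dots, kn$, proving the concluding remark. The main obstacle I anticipate is purely organizational: carrying the split index notation cleanly through the pullback of $\eta^\alpha$ so that the cancellations are transparent, and correctly invoking \eqref{eq:GeneratingFunctionCondition} to see that $-\partial F^\alpha/\partial p_i^\alpha$ is a well-defined ($\alpha$-independent) coordinate value along $L$.
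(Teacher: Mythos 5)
Your proposal is correct and follows essentially the same route as the paper's proof: pull back $\eta^\alpha$ along $\varphi$, observe the cancellations give $\varphi^*\eta^\alpha=0$ (hence isotropy, since $\varphi^*d\eta^\alpha=d\varphi^*\eta^\alpha=0$), and then exhibit the explicit complement spanned by the $q^i$-directions ($i\in I$) and $\partial_{p^\alpha_j}$ ($j\in J$), with the compatibility condition \eqref{eq:GeneratingFunctionCondition} entering exactly where you place it, to make $q^i=-\partial F^\alpha/\partial p^\alpha_i$ well defined. Your only (harmless, arguably cleaner) deviation is using $V_i=p^\alpha_i\partial_{s^\alpha}+\partial_{q^i}$ so that the complement lies literally inside $\ker\bEta$, and checking complementarity by inspecting tangent vectors rather than via the projection $\tau$ used in the paper.
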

\begin{proof}
    To prove that $L$ is isotropic, it suffices to calculate the pull-back of the $k$-contact form $\bEta$ defined on $M$ by the parametrization $\varphi:\R^{n_2+kn_1}\longrightarrow U$. Indeed, one has
    \begin{align*}
\varphi^* \eta^\alpha &= d\left(F^\alpha - \sum_{i\in I}p_i^\alpha \frac{\partial F^\alpha}{\partial p_i^\alpha}\right) + \sum_{i\in I}p_i^\alpha d\left(\frac{\partial F^\alpha}{\partial p_i^\alpha}\right) - \sum_{j\in J}\frac{\partial F^\alpha}{\partial q^j} dq^j \\
&= \sum_{j\in J}\frac{\partial F^\alpha}{\partial q^j} dq^j + \sum_{i\in I}\frac{\partial F^\alpha}{\partial p_i^\alpha} dp_i^\alpha - \sum_{i\in I}\frac{\partial F^\alpha}{\partial p_i^\alpha} dp_i^\alpha - \sum_{i\in I}p_i^\alpha d\left(\frac{\partial F^\alpha}{\partial p_i^\alpha}\right) + \sum_{i\in I}p_i^\alpha d\left(\frac{\partial F^\alpha}{\partial p_i^\alpha}\right) - \sum_{j\in J}\frac{\partial F^\alpha}{\partial q^j} dq^j = 0
\end{align*}
for $\alpha=1,\ldots,k$. 
Contrary to other parts of this work, the equation above does not assume the Einstein summation convention to clarify that every index ranges over different values. Since $L$ is a submanifold and its tangent space is included in $\ker \bEta$, one obtains that 
$$
(\varphi^*d\bEta)(X_1,X_2)=X_1\varphi^*\bEta(X_2)-X_2\varphi^*\bEta(X_1)-\varphi^*\bEta([X_1,X_2])=0,
$$
where we used that the commutator of $[X_1,X_2]$ is tangent to $L$ and $TL\subset \ker\bEta$ because $\varphi^*\bEta=0$. 

To prove that $L$ is Legendrian, it is still left to construct an appropriate supplementary  $W_x\subset \ker \bEta_x$ to $T_xL$ so that $d\bm \eta|_{W_x\times W_x}=0$ for every $x\in L$. If we set on $U$ a distribution
$$
W=\left\langle\frac{\partial}{\partial q^i},\frac{\partial}{\partial p^\alpha_j}\right\rangle,
$$
the rank of $W$ is complementary to the dimension of $TL$ in $\ker\bm\eta$ and $d\bm\eta|_{W\times W}=0$. Moreover, the projection $\tau:(s^\alpha,q^l,p^\alpha_l)\in U\mapsto (q^j,p^\alpha_i)\in \mathbb{R}^{n_2+kn_1}$ is such that $W\subset \ker T\tau$ while $T\tau\circ T\varphi=T{\rm Id}$. Hence, $W_x\oplus T_xL=\ker \bm \eta_x$ for every $x\in L$. 

Parametrization \eqref{eq:LegendreParametrisation} is well defined if $q^i=-\frac{\partial F^\alpha}{\partial p^\alpha_i}$ for all $\alpha=1,\ldots,k$, which amounts to condition \eqref{eq:GeneratingFunctionCondition}.
\end{proof}

\begin{note}
Condition \eqref{eq:GeneratingFunctionCondition} imposed on the set of functions $F^1,\ldots,F^k$ is very restrictive. Indeed, the equation
    $$
    \frac{\partial F^\alpha}{\partial p_i^\alpha}
    (q^j,p_l^\alpha) = \frac{\partial F^\beta}{\partial p_i^\beta} (q^j,p_l^\alpha),\qquad i,l\in I,\quad j\in J,\quad \alpha,\beta=1,\ldots,k,
    $$
    can be satisfied if and only if
    $$
    \frac{\partial F^\alpha}{\partial p_i^\alpha}(q^j,p_l^\alpha) = \frac{\partial F^\alpha}{\partial p_i^\alpha}(q^j) = \frac{\partial F^\beta}{\partial p_i^\beta}(q^j) = \frac{\partial F^\beta}{\partial p_i^\beta}(q^j,p_l^\alpha),\qquad i,l\in I,\quad j\in J,\quad \alpha,\beta=1,\ldots,k,
    $$
    so functions $F^\alpha$ must take the form
    $$
    F^\alpha(q^j,p_i^\alpha) = \sum_{i \in I} p_i^\alpha f^i(q^j),\qquad j\in J,
    $$
    for some functions $f^i\in C^\infty(M)$. This means that the $0=F^\alpha-\sum_{i \in I} p_i^\alpha f^i(q^j)$ in parametrization \eqref{eq:GeneratingFunctionCondition} and the coordinates $s^1,\ldots,s^k$ of the points of the Legendrian submanifold vanish. 
\end{note}
    
    In view of the above, the restriction of the momenta $p^\alpha_j$ on $L$, with $j\in J$, are linear combinations of the restriction to $L$ of the momenta $p_i^\alpha$ for $i\in I$ with coefficients depending only on the $q^j$ with  $j\in J$. This shows the special character of the above parametrisation for certain $k$-contact Darboux coordinates, which is anyhow justified for our posterior theory in  relativistic thermodynamics. Despite that, the following proposition illustrates which parts of the above parametrisation can be generalised to any Legendrian submanifold in $k$-contact geometry. 

\begin{proposition} \label{prop:Legendrian Parametrisation} Every Legendrian submanifold of a polarised $k$-contact manifold $(M,\bm \eta,\mathcal{V})$ admits a parametrisation in terms of a system of variables $\{q^i,p_j^\alpha\}$, where $i=1,\ldots,s$ while $j=s+1,\ldots,n$ and $\alpha=1,\ldots,k$, of a $k$-contact Darboux coordinates system $\{q^l,p_l^\alpha\}$ for $(M,\bEta,\mathcal{V})$.
\end{proposition}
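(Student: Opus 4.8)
The plan is to argue locally around a fixed point $x_0\in L$, producing a $k$-contact Darboux chart $\{s^\alpha,q^l,p^\alpha_l\}$ adapted to the polarisation (i.e.\ with $\mathcal{V}=\langle\partial_{p^\alpha_l}\rangle$, which exists by Theorem~\ref{Th:DarbouxkContact}) and an integer $s$ such that the projection $\pi\colon(s^\alpha,q^l,p^\alpha_l)\mapsto(q^1,\dots,q^s,p^1_{s+1},\dots,p^k_n)$ restricts to a local diffeomorphism of $L$ near $x_0$; its inverse is then the asserted parametrisation. By the inverse function theorem it suffices to show that $T_{x_0}\pi$ maps $T_{x_0}L$ isomorphically onto $\R^{s+k(n-s)}$, which requires controlling both $\dim L$ and the position of $T_{x_0}L$ inside $\ker\bEta_{x_0}$ relative to $\mathcal{V}$.

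First I would set up coordinates. In any such adapted chart one has $\ker\bEta=\langle A_i\rangle\oplus\mathcal{V}$ with $A_i=p^\alpha_i\partial_{s^\alpha}+\partial_{q^i}$ and $\mathcal{V}=\langle\partial_{p^\alpha_i}\rangle$, while $d\eta^\alpha=dq^i\wedge dp^\alpha_i$; hence the only non-zero pairings of these basis vectors under $d\bEta$ are $d\eta^\alpha(A_i,\partial_{p^\beta_j})=\delta^\alpha_\beta\delta_{ij}$. The translation $s^\alpha\mapsto s^\alpha-p^\alpha_i(x_0)q^i$, $p^\alpha_i\mapsto p^\alpha_i-p^\alpha_i(x_0)$ preserves the Darboux form and $\mathcal{V}$ and makes $p^\alpha_i(x_0)=0$, a harmless normalisation. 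Let $\pi_{\mathcal{A}}\colon\ker\bEta_{x_0}\to\langle A_i|_{x_0}\rangle$ be the projection along $\mathcal{V}_{x_0}$ and put $s=\dim\pi_{\mathcal{A}}(T_{x_0}L)$. A linear change of the $q^i$ only recombines the $A_i$ among themselves and the $p^\alpha_i$ among themselves, hence preserves both the Darboux form and $\mathcal{V}$; using it I may arrange $\pi_{\mathcal{A}}(T_{x_0}L)=\langle A_1|_{x_0},\dots,A_s|_{x_0}\rangle$.

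Next I would pin down $T_{x_0}L$. With $K:=T_{x_0}L\cap\mathcal{V}_{x_0}$, isotropy of $T_{x_0}L$ gives $d\eta^\alpha(v,u)=0$ for all $v\in K$ and $u\in T_{x_0}L$; since $d\eta^\alpha(v,u)$ reduces to $-\sum_i u^i v^\alpha_i$ with the coefficients $(u^i)_{i\le s}$ of $\pi_{\mathcal{A}}(u)$ arbitrary and $u^i=0$ for $i>s$, this forces $K\subset\mathcal{V}^{>s}_{x_0}:=\langle\partial_{p^\alpha_j}|_{x_0}:j>s\rangle$. The reverse inclusion is the key step, and I expect it to be the main obstacle: here one genuinely uses that $L$ is \emph{Legendrian}, so that $T_{x_0}L$ is a maximal isotropic subspace (the subspace form of Theorem~\ref{Th:MaxLeg}). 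A short computation, again using $u^i=0$ for $i>s$, shows that $T_{x_0}L+\mathcal{V}^{>s}_{x_0}$ is still isotropic, so maximality yields $\mathcal{V}^{>s}_{x_0}\subset T_{x_0}L$ and hence $K=\mathcal{V}^{>s}_{x_0}$. Consequently $\dim L=\dim T_{x_0}L=s+\dim K=s+k(n-s)=n+(k-1)(n-s)$, in agreement with Proposition~\ref{prop:GeneratingFunction}.

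Finally I would verify the isomorphism. For $i=1,\dots,s$ choose $e_i\in T_{x_0}L$ with $\pi_{\mathcal{A}}(e_i)=A_i|_{x_0}$; since $e_i$ is determined only modulo $K=\mathcal{V}^{>s}_{x_0}$, I may take $e_i=A_i|_{x_0}+\sum_{\alpha,\,l\le s}c^\alpha_{il}\,\partial_{p^\alpha_l}|_{x_0}$. Then $\{e_1,\dots,e_s\}\cup\{\partial_{p^\alpha_j}|_{x_0}:j>s,\ \alpha=1,\dots,k\}$ is a basis of $T_{x_0}L$. Since $\pi$ discards the coordinates $s^\alpha$, the $q^j$ with $j>s$, and the $p^\alpha_i$ with $i\le s$, one gets $T_{x_0}\pi(e_i)=\partial_{q^i}$ for $i\le s$ and $T_{x_0}\pi(\partial_{p^\alpha_j})=\partial_{p^\alpha_j}$ for $j>s$, so this basis is sent to the standard basis of $\R^{s+k(n-s)}$. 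As $\dim L=s+k(n-s)$, the inverse function theorem applied to $\pi|_L$ produces the claimed parametrisation in the variables $\{q^i\}_{i\le s}\cup\{p^\alpha_j\}_{j>s}$. (Isotropy additionally forces $c^\alpha_{ij}=c^\alpha_{ji}$, which is not needed here but recovers the form of the parametrising $k$-functions of Proposition~\ref{prop:GeneratingFunction}.)
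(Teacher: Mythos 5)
Your argument is correct, and it reaches the same structural conclusions as the paper (the integer $s$, the inclusion $\langle\partial_{p^\alpha_j}\rangle_{j>s}\subset TL$ forced by maximal isotropy, the dimension count $\dim L=s+k(n-s)$, and the projection onto $\{q^i,p^\alpha_j\}$), but the execution is genuinely different and noticeably more self-contained. The paper defines $s$ as the maximal rank of the codistribution $\mathcal{E}_x=\{\iota_vd\eta^\alpha_x|_{T_xL}\}$ on an open subset of $L$ (invoking lower semi-continuity of rank), then builds the adapted covector basis via the linear $k$-symplectic theorem of Grabowska et al.\ and a projection along the Reeb directions to a polarised $k$-symplectic manifold, and finishes with a functional-independence argument. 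You instead fix a point, work entirely inside one Darboux chart adapted to $\mathcal{V}$ with explicit pairings $d\eta^\alpha(A_i,\partial_{p^\beta_j})=\delta^\alpha_\beta\delta_{ij}$, and close with the inverse function theorem; the only external inputs are the polarised Darboux theorem and the subspace form of Theorem~\ref{Th:MaxLeg}, both already available in the paper. Your route buys two things: it works at \emph{every} point of $L$ rather than on an open set of maximal rank (and, since $\dim L$ is constant on a connected $L$, it shows for $k\geq 2$ that $s=(kn-\dim L)/(k-1)$ is in fact constant, so no semi-continuity argument is needed), and it makes the normal form of $T_{x_0}L$ completely explicit, which is why you can read off the symmetry $c^\alpha_{ij}=c^\alpha_{ji}$ linking back to Proposition~\ref{prop:GeneratingFunction}. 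The one shared hypothesis to flag is that both proofs use a Darboux chart in which $\mathcal{V}=\langle\partial_{p^\alpha_l}\rangle$; Theorem~\ref{Th:DarbouxkContact} as stated only gives the normal form of $\bEta$, so you should cite the polarised version of the Darboux theorem (as in the reference given for it) for the adaptation to $\mathcal{V}$ — but this is exactly what the paper's own proof assumes as well, so it is not a gap in your argument.
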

\begin{proof} Consider the codistribution 
 $$
 \mathcal{D}_x=\{\iota_vd{\eta}_x^\alpha , v\in \mathcal{V}_x,\alpha=1,\ldots,k\},\qquad x\in M.
 $$ 
 Since $\bm \eta$ admits a polarisation, $\mathcal{D}$ has rank $n$ everywhere, as follows from the $k$-contact Darboux theorem. Let us define
$$
 \mathcal{E}_x=\{\iota_vd{\eta}_x^\alpha|_{TL}, v\in \mathcal{V}_x,\alpha=1,\ldots,k\},\qquad x\in L.
 $$ 
 Let $s$ be the maximal rank of $\mathcal{E}$ on an open subset $U$ of $L$, which exists because the rank of $\mathcal{E}$ is a lower semi-continuous function on $L$ (cf. \cite{VAI_94a}). 
 As $\mathcal{E}= \mathcal{D}|_{TL}$, one obtains that $s\leq n$ and $\mathcal{D}$ admits at every $x\in U\subset L$, some $n-s$ covectors that annihilate $T_xL$. The forms $d\eta_x^1,\ldots,d\eta_x^k$ restricted to $\ker\bEta_x$ become a $k$-symplectic linear form, which is non-degenerate,  on a linear space of dimension $(n+1)k$ with a polarisation. The linear $k$-symplectic Theorem in \cite[Theorem 3.2]{GLRR_24} can be applied to $d\bEta|_{\ker \bEta\times \ker \bEta}$. Using this theorem, one obtains a basis $e^1,\ldots,e^s,e^{s+1},\ldots,e^n$ of covectors such that  $e^{s+1},\ldots,e^n$ vanish on $T_xL+ \mathcal{V}_x$. Hence, the linear $k$-symplectic Theorem shows that at $x\in U\subset L$, one can write $d{\rm \eta}_x^\alpha=\sum_{l=1}^{n}e^l\wedge e_l^\alpha$ for $\alpha=1,\ldots,k$. If  $v\in T_xL$, then $\iota_vd{\rm \eta}_x^\alpha=\sum_{k=1}^s\langle e^k,v\rangle e_k^{\alpha}-\sum_{l=1}^ne^l\langle e^\alpha_l,v\rangle$.  Hence, 
 there exist $(n-s)k$ vectors in $\mathcal{V}_x$, namely $e_\alpha^{j}$ for $j=s+1,\ldots,n$ and $\alpha=1,\ldots,k$, that are orthogonal to every vector field taking values in $T_xL$ relative to $d\eta_x^1,\ldots,d\eta_x^k$. Since $L$ is Legendrian, its tangent space at $x$ is maximally isotropic, and the tangent vectors $e^j_\alpha$ for $\alpha=1,\ldots,k$ and $j=s+1,\ldots,n$  must be tangent to $L$, which must have, at least, dimension $s+k(n-s)$. This dimension is the largest possible, as more tangent vectors would imply that $e^1,\ldots,e^s$ are linearly dependent on $T_xL$, which goes against our initial assumption. Choose coordinates $q^1,\ldots,q^n$ that are first integrals of the vector fields taking values in  $\mathcal{V}$. It is a standard fact from differential geometry that one can choose them so that $dq^1_x=e^1,\ldots,dq^s_x=e^s$. Hence, $dq^1,\ldots,dq^{s}$ are functionally independent on a neighbourhood of $x\in L$ and $q^1,\ldots,q^s$ become functionally independent variables even when restricted to $L$ on a neighbourhood of $x$ within $L$. Note that Reeb vector fields are not tangent to $L$ and are Lie symmetries of $d\bEta$ and $\mathcal{V}$, which can be projected along to $L$ to a $k$-symplectic manifold with a polarization (at least locally).
 Then, Lemma 3.4 and Theorem 3.5 in \cite{GLRR_24} give rise to a system of coordinates  $\{q^l,p_l^\alpha\}$, where $l=1,\ldots,n$ and $\alpha=1,\ldots,k$. The $k$-contact Darboux theorem shows that the $d\eta^\alpha$ can be locally written as $\sum_{l=1}^n\sum_{\alpha=1}^kdq^l\wedge dp_l^\alpha$. From the maximal isotropy of $L$ and since $dq^{s+1},\ldots,dq^n$ vanish on $T_xL$, one finds that the tangent vectors $\partial/\partial p^\alpha_j$ for $j=s+1,\ldots,n$ must be tangent to $L$ at $x\in L$. Hence, the forms $dp^\alpha_j$ are functionally independent on a neighbourhood of $x\in L$.    We also know that $dq^1,\ldots,dq^s$ are functionally independent on $L$ and vanish on $\mathcal{V}$. Hence, $\{q^i,p^\alpha_j\}$ form a coordinate system when restricted to $L$.
\end{proof}

 \begin{note}
     It follows from Proposition \ref{prop:Legendrian Parametrisation} that the only possible dimensions of Legendrian submanifolds of polarised $k$-contact manifolds of dimension $k+nk+n$ are $ks$ for $s=1,\ldots,n$.
 \end{note}

Transformation \eqref{eq:LegendreParametrisation} for $k=1$ is the particular case of Arnold's parametrization of contact Legendrian submanifolds \cite[p. 367-368]{Ar89} of the form
$$
(\hat{q}^i,\hat{p}_j)\mapsto \left(F-\hat{q}^i\frac{\partial F}{\partial \hat{q}^i} ,\hat{q}^i,-\frac{\partial F}{\partial \hat{p}_j},\frac{\partial F}{\partial \hat{q}^i},\hat{p}_j\right)
$$
under the change $\{q^j=\hat{p}_j,p_i=-\hat{q}^i\}$ for $j\in J$ and $i\in I$. All these facts will have a physical meaning in the following sections. 

Note that the $s$ coordinate of the parametrisation  \eqref{eq:LegendreParametrisation} for $k=1$ can be understood as a Legendre transformation from the function $F(q^j,p_i)$ to a new function $G(q^j,q^i)$ provided that the $p_i$ can be written as functions of $\{q^j,q^i\}$. In such a case, one could understand that the image of the parametrization is of the form
$$
(q^j,q^i)\mapsto \left(G,q^j,q^i,\frac{\partial G}{\partial q^j},\frac{\partial G}{\partial q^i}\right),
$$
which is the first-jet prolongation of the function $G:\mathbb{R}^n\rightarrow \mathbb{R}$. In this sense, the parametrization \eqref{eq:LegendreParametrisation} is related to a natural local generating function for $L$, namely a function $G:(q^1,\ldots,q^n)\in \mathbb{R}^n\mapsto G(q^1,\ldots, q^n)\in \mathbb{R}$ such that the points of the Legendrian manifold are of the form $(G(q^l),q^l,d_qG)$. A similar construction can be achieved for $I=\emptyset$, $J=\{1,\ldots,n\}$,  and any value of $k$. 

In what follows, the function $F:\mathbb{R}^{n_2+kn_1}\mapsto (F^1,\ldots,F^k)\in \mathbb{R}^k$ is called the {\it parametrizing $k$-function} of the Legendrian submanifold $\varphi(\mathbb{R}^{n_2+kn_1})$. 

\section{{\it k}-contact approach to ``extensive relativistic hydrodynamics''}
Relativistic hydrodynamics is a phenomenological model of a classical field theory in which one considers the evolution of a fluid element in flat spacetime. This theory is commonly used in fields such as astrophysics and for describing quark-gluon plasma (QGP), and it has attracted significant interest in recent times in the high-energy physics community \cite{FHS18, RR07, Ro10, JR16}. Let us prove for the first time that relativistic hydrodynamics possesses a rich mathematical and geometrical structure, a rigorous formulation of which could provide significant insight into its nature.

Let us briefly sketch the mathematical description of contact thermodynamics, which is a basis for the development of $k$-contact formalism for relativistic hydrodynamics. The details of a contact description of thermodynamical systems can be found in \cite[Chapter 5]{Br19}. Consider a thermodynamical phase space given by a manifold $\R^7$ with canonical coordinates $(E, P, V, T, S,\mu, N)$. The first law of thermodynamics is frequently expressed as
\begin{equation}\label{eq:ILawThermo}
    dE - TdS - \mu dN + PdV=0,
\end{equation}
where we understand that $E,T,\mu$ and $P$ are functions depending on $S,N,V$. From a mathematical point of view, the equation above can be interpreted as the vanishing of the contact form
\begin{equation}\label{eq:ThermoContactForm}
    \eta=dE - TdS - \mu dN + PdV,
\end{equation}
when restricted to a submanifold describing a thermodynamical state. Indeed, equilibrium states of a thermodynamic system are certain submanifolds $L\subset \R^7$ with the property that $\eta|_{TL}=0$. For physical reasons, one requires the submanifold $L$ to be not only isotropic but also Legendrian (there is no other isotropic submanifold containing $L$). This ensures that the dimension of the Legendrian submanifold, in our system, is three, and this matches the physical degrees of freedom of a thermodynamic system in our setup.
Note that there exist several possible variables to describe our thermodynamic state. In one of the possible coordinate systems, $S, V, N$ are independent variables on $L$. Then, one may write on $L$ that 
$$
E=f(S,V,N).
$$
This is the equation of state of a thermodynamic system. From the perspective of contact geometry, the equation of state can be interpreted as a choice of a function $f$ that determines uniquely a Legendrian submanifold of the contact manifold $(\R^7,\eta)$. The Legendrian submanifold defined by the generating function $f(S, V, N)$ is given by a parametrisation
\begin{equation}\label{eq:ThermoParametrisation}
    (S, V, N)\longmapsto\left(f(S,V,N), S,V,N,\frac{\partial f(S,V,N)}{\partial S},-\frac{\partial f(S,V,N)}{\partial V},\frac{\partial f(S,V,N)}{\partial N}\right).
\end{equation}
Then, in an equilibrium state, which is understood as a point of a Legendrian submanifold, it holds
\begin{equation}\label{eq:IdealGasEqState}
    P=-\frac{\partial f(S,V,N)}{\partial V},\qquad T=\frac{\partial f(S,V,N)}{\partial S},\qquad \mu=\frac{\partial f(S,V,N)}{\partial N}.
\end{equation}
Moreover, assuming that $f(S, V, N)$ is a homogeneous function, what physically corresponds to a choice of an extensive thermodynamic system,  Euler's homogeneity theorem yields the Gibbs equality
\begin{equation}\label{eq:GibbsEq}
    E + PV = TS + \mu N,
\end{equation}
 on the chosen Legendrian submanifold.

A thermodynamic process may be modelled within contact Hamiltonian mechanics \cite{Br19,BLN15,MNSS91,Mr00,EMS07,Qu07}. In this picture, a process is an integral curve $\gamma(t)$ of the contact Hamiltonian vector field $X_H$ generated by a Hamiltonian $H\in C^\infty(M)$. To ensure that the evolution connects equilibrium states, one requires that $\gamma$ is contained in a Legendrian submanifold $L$, i.e. that $X_H$ should be tangent to $L$; this condition guarantees that the first law of thermodynamics is preserved along the process. A standard result in contact geometry (see \cite[Thm.~2.3]{Br19}) states that $X_H$ is tangent to $L$ if and only if $H\big|_L=0$. We illustrate this construction with an example.

\begin{example}
    Let us analyse an isentropic process of the ideal gas. The thermodynamic phase space of the ideal gas is the contact manifold $(\R^7,\eta)$ described at the beginning of this section. The equation of state reads
    $$
    S=Nc_VR\log(U/U_0)+NR\log(V/V_0)+Ns_0,
    $$
    where $R$ is the universal gas constant, $c_V$ is the specific heat and $U_0, V_0,s_0$ are constants. Therefore, the internal energy can be expressed as
    $$
    U=U_0\left(\frac{V}{V_0}\right)^{-1/c_V}e^{(S/N-s_0)/(c_VR)}.
    $$
    The Hamiltonian of an equilibrium isentropic process given by a Legendrian submanifold defined by $f=f(S,V,N)$ can be written in the form
    $$
    H_S:=-\left(P+\frac{\partial f}{\partial V}\right)V.
    $$
    The Hamiltonian was defined in such a way because $H_S$ is a Hamiltonian of an equilibrium process if and only if $H_S$ vanishes when restricted to a Legendrian submanifold given by the parametrisation  \eqref{eq:ThermoParametrisation}. The Hamiltonian vector field can be expressed in coordinates as
    $$
    X_{H_S}=X^U\partial_U+X^T\partial_T+X^S\partial_S+X^P\partial_P+X^V\partial_V+X^\mu\partial_\mu+X^N\partial_N.
    $$
    Then, the contact Hamiltonian equations, namely the geometric HdDW equations \eqref{eq:HdDW} for $k=1$, yield
    \begin{align*}
        & X^S=0,\quad X^T=V\frac{\partial^2f}{\partial V\partial S},\quad X^P=-P-\frac{\partial f}{\partial V}-V\frac{\partial^2 f}{\partial V^2},\quad X^V=V,\\ &X^N=0,\quad X^\mu=V\frac{\partial^2f}{\partial V\partial N},\quad X^U=\frac{\partial f}{\partial V}V.
    \end{align*}
    For a general Hamiltonian one obtains that $X^S=\frac{\partial H_S}{\partial T}$, so a Hamiltonian for an isentropic process must be independent of $T$. The isentropic process is an integral curve of the vector field $X_{H_S}$.
    
\end{example}

To create equations describing relativistic hydrodynamics, these equations are typically recast in a covariant (four-vector) form \cite{FH25}. This leads to describe the extensive entropy four-current $\mathbb{S}^\mu$ via the following expressions
\begin{align}
\mathbb{S}^\mu &= P^\mu V - \xi \mathbb{N}^\mu + \beta_\lambda \mathbb{T}^{\lambda\mu},\label{eq:4GibbsEquality} \\
d\mathbb{S}^\mu &= - \xi d\mathbb{N}^\mu + \beta_\lambda d\mathbb{T}^{\lambda\mu} + P^{\mu}dV.\label{eq:4ILawThermo}
\end{align}
Here, we use the standard notation $\beta^\mu = \frac{u^\mu}{T}$, $P^\mu = P\beta^\mu$, and $\xi = \frac{\mu}{T}$, where $u^\mu$ stands for the four-velocity of the fluid. The tensors $\mathbb{N}^\mu$ and $\mathbb{T}^{\lambda\mu}$ are the extensive baryon current and the extensive energy-momentum tensor, respectively. It is worth noting that there exist more general models, where the four-vector $P^\mu$ does not depend on $\beta^\mu$ and $P$, but it is an independent field \cite[p. 3]{FH25} - where one wants to incorporate spin degrees of freedom. It is important to notion that we use this ``extensive” formulation, which includes volume, for its consistency with the $k$-contact geometrical approach. 

In this case, the thermodynamic phase space is $\R^{4k+2+k^2}$ with coordinates $(\mathbb{S}^\mu, P^\mu, V, \xi, \mathbb{N}^\mu,\beta^\mu, \mathbb{T}^{\lambda\mu})$. In the extensive form of relativistic hydrodynamic relations, we easily recognize $k$-contact extensions of contact geometric structures employed in classical thermodynamics. Equation \eqref{eq:4ILawThermo} says that the space of equilibrium states should be a four-contact isotropic submanifold of the four-contact structure given by a four-contact form
\begin{equation}\label{eq:RelHydroKContactStructure}
    \bEta =(d\mathbb{S}^\mu + \xi d\mathbb{N}^\mu - \beta_\lambda d\mathbb{T}^{\lambda\mu} - P^{\mu}dV)\otimes e_\mu.
\end{equation}
Then, $\dim(\R^{k^2+2+4k})=k+nk+n$ for $n=k+2$, so the $k$-contact manifold $(\R^{k^2+2+4k},\bm\eta)$ may admit a polarization. Although such coordinates are not an example of $k$-contact Darboux coordinates, one can observe that the $k$-contact form defined by \eqref{eq:RelHydroKContactStructure} admits a polarization 
$$
\mathcal{V}=\left\langle\beta^\lambda\frac{\partial}{\partial \mathbb{S}^\mu}+\frac{\partial}{\partial \mathbb{T^{\lambda\mu}}},\xi\frac{\partial}{\partial \mathbb{S}^\mu}+\frac{\partial}{\partial \mathbb{N}^\mu}, \frac{\partial}{\partial P^\mu}\right\rangle
$$
of rank $nk=k(k+2)$, and one can construct Darboux coordinates for this four-contact form using the procedure described in the proof of existence of Darboux coordinates \cite[Theorem 7.1.11]{Ri21}. However, to preserve a clear physical interpretation of our equations, we do not perform such a coordinate change. It is worth stressing that the previous approach, and following parts of this work, work for $k=2,3$ so as describe thermodynamical models on space times with one or two spatial variables.

In Darboux coordinates $(s^\mu,p^\mu_i,q^i)$, one can construct Legendrian submanifolds which describe the equilibrium states using the procedure presented in Proposition \ref{prop:GeneratingFunction}. If additionally generating functions are extensive, one recovers the identity \eqref{eq:4GibbsEquality}, which is valid on the Legendrian submanifold similarly as in the contact case.

As a classical field theory, relativistic hydrodynamics is constructed from the fundamental fields that are maps from the spacetime $\R^{1,k-1}$ \footnote{The $k=4$ case corresponds to the Minkowski space time $\mathfrak{M}^{1,3}=\mathbb{R}^{1,3}$.} to the configurational space $\R^{k^2+4k+2}$. In our example a general form of a field reads
\begin{equation}\label{eq:HydroField}
    \psi(x)=(\mathbb{S}^\mu(x), P^\mu(x), V(x), \xi(x), \mathbb{N}^\mu(x),\beta^\mu(x), \mathbb{T}^{\lambda\mu}(x)),
\end{equation}
where $x$ stands for a point in $\R^{1,k-1}$. The equations of relativistic hydrodynamics proposed by physicists are the conservation laws for the energy-momentum tensor and the baryon current,

\begin{equation}
\partial_\mu \mathbb{T}^{\mu\nu}(x)=0, \hspace{1cm} \partial_\mu \mathbb{N}^{\mu}(x)=0.
\end{equation}
We will henceforth omit writing the dependence on $x$ for simplicity. A consequence of these conservation laws, combined with the equations \eqref{eq:4GibbsEquality} and \eqref{eq:4ILawThermo}, is the conservation of entropy, $\partial_\mu \mathbb{S}^{\mu}=0$, which implies the system is in thermal equilibrium \cite{DFHR24}. Let us try to derive these equations using the $k$-contact HdDW equations. The section \eqref{eq:HydroField} induces  $\psi'_{\mu}$ given by

\begin{equation}
    \psi'_{\mu}(x) =  \partial_\mu \mathbb{S}^{\nu}\partial_{\mathbb{S^{\nu}}} + \partial_{\mu}\xi \partial_{\xi} + \partial_{\mu} \mathbb{N}^{\nu}\partial_{\mathbb{N}^{\nu}} + \partial_{\mu}V \partial_{V} + \partial_{\mu}P^{\nu}\partial_{\beta_{\nu}} + \partial_{\mu}\mathbb{T}^{\alpha \beta} \partial_{\mathbb{T}^{\alpha \beta}}.
    \label{eq:psi_prim_revised}
\end{equation}

Let us now analyse the $k$-contact HdDW equations for the $k$-contact form
\begin{equation*}
{\bm \eta} = (d\mathbb{S}^{\mu} + \xi d\mathbb{N}^\mu - P^{\mu}dV - \beta_{\lambda}d\mathbb{T}^{\lambda \mu})\otimes e_\mu.
\end{equation*}
At first, we compute
\begin{equation*}
d{\bm \eta} = (d\xi \wedge d\mathbb{N}^{\mu} - d P^{\mu} \wedge dV - d\beta_{\lambda} \wedge d \mathbb{T}^{\lambda\mu})\otimes e_\mu.
\end{equation*}
Assuming a zero Hamiltonian ($H=0$), the HdDW equations for  $\psi'_\mu$ from equation  \eqref{eq:psi_prim_revised} are
\begin{equation}\label{eq:HdDW_H_0_revised}
\iota_{\psi'_\mu}d{\eta^{\mu}}= 0, \qquad
 \iota_{\psi_\mu'}{\eta^{\mu}}=0.
\end{equation}
The first equation expands to
\begin{equation*}
    \iota_{\psi'_\mu}d{\eta^{\mu}} = (\partial_\mu \xi) d\mathbb{N}^{\mu} - (\partial_\mu \mathbb{N}^{\mu}) d\xi - (\partial_\mu P^\mu)dV + (\partial_\mu V) dP^{\mu} - (\partial_\mu \beta_{\lambda})d\mathbb{T}^{\lambda \mu} + (\partial_\mu \mathbb{T}^{\lambda \mu}) d\beta_{\lambda} = 0.
\end{equation*}
For this equation to hold, the coefficients of the independent basis one-forms must vanish independently. This directly yields the conditions $\partial_\mu \xi = 0$, $\partial_\mu \mathbb{N}^{\mu} = 0$, $\partial_\mu P^{\mu} = 0$, etc. The second  equation in \eqref{eq:HdDW_H_0_revised} is
\begin{equation}
    \iota_{\psi'_\mu}{\eta^{\mu}} = \partial_\mu \mathbb{S}^\mu + \xi (\partial_\mu \mathbb{N}^{\mu}) - P^{\mu}(\partial_\mu V) - \beta_\lambda (\partial_\mu \mathbb{T}^{\lambda \mu}) = 0.
\end{equation}
Using the results from the first equation (e.g., $\partial_\mu \mathbb{N}^{\mu}=0$), this simplifies to $\partial_\mu \mathbb{S}^\mu = 0$, confirming that entropy production is zero. Collecting all these conditions, we obtain the full description of the process:
\begin{align}
    &\partial_\mu \xi = 0,      && \partial_\mu \mathbb{N}^{\mu} = 0, && \partial_\mu P^{\mu} = 0, \nonumber \\
    &\partial_\mu V = 0,      && \partial_\mu \beta_\lambda = 0,   && \partial_\mu \mathbb{T}^{\mu\nu} = 0, && \partial_\mu \mathbb{S}^{\mu} = 0.
    \label{eq:HdDW_Phys_1_revised}
\end{align}
These equations describe a system in global thermodynamic equilibrium. The conditions on $\xi$ and $\beta^\lambda$, which are analogous to generalized Tolman-Klein conditions for thermal equilibrium, enforce a uniform flow with no temperature or chemical potential gradients. As expected for an equilibrium system, all extensive currents are conserved, and entropy production is zero.

\section{Analysis of pseudo-gauge degrees of freedom}

In the context of relativistic hydrodynamics, pseudo-gauge transformations are defined as modifications to the energy-momentum tensor, $\mathbb{T}^{\lambda\mu}$, and the baryon current, $\mathbb{N}^\mu$, that do not violate the fundamental conservation equations \eqref{eq:HdDW_Phys_1_revised}. However, the physical interpretation of this pseudo-gauge ambiguity remains unclear. It is known that while total quantities are pseudo-gauge invariant, quantum energy density correlations are dependent on the specific pseudo-gauge choice \cite{DFRS21}.

Furthermore, not every system can undergo a pseudo-gauge transformation due to an insufficient number of degrees of freedom \cite{DFHR24}. For a general configuration in causal hydrodynamics, demanding that the energy-momentum tensor remains symmetric both before and after the transformation leads to an equation for a super-potential $\Phi^{\lambda\mu\nu}$. This equation is generally unsolvable. Only by imposing additional symmetry assumptions on the system, such as the Bjorken flow \cite{DFHR24}, can the equations be solved, thus permitting the pseudo-gauge transformation.

This dependence presents a significant conceptual challenge in theories where physically measurable quantities rely on the explicit form of the pseudo-gauge. Examples include the density operator used to compute averages in Zubarev’s approach to relativistic hydrodynamics \cite{BBG19, BH25}, viscosity as derived from the Kubo formula \cite{HST83}, and other transport coefficients within the BDNK formalism \cite{Ko19}. This implies one of two possibilities: either the choice of pseudo-gauge can modify physically measurable quantities, or every physical measurement is already performed in a specific, albeit arbitrary, pseudo-gauge. The latter proposition seems unnatural, as physical reality is expected to be unique and independent of the mathematical formalisms used to describe it.

From a mathematical point of view, such transformations are possible due to the non-uniqueness of solutions of the geometric $k$-contact HdDW equations that was analysed in Theorem \ref{thm:HdDWNonUniqueness}. To obtain a full description of a relativistic hydrodynamical system, one needs to ensure that solutions $\psi$ of the $k$-contact HdDW equations \eqref{eq:HdDW} take values in isotropic submanifolds. This implies that $\psi'_{\mu}$ is tangent to an isotropic submanifold. Let us prove two theorems that give when such solutions are possible.
\begin{proposition}\label{prop:ConstrainedHamiltonian}
    If the image of a solution to $k$-contact HdDW equations with a Hamiltonian $H$ is contained in an isotropic submanifold $L$, then $H\vert_L=0$.
\end{proposition}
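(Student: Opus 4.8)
The plan is to exploit the two geometric HdDW equations \eqref{eq:HdDW} together with the hypothesis that the integral section $\psi$ takes values in an isotropic submanifold $L$. Recall that an isotropic submanifold satisfies $TL\subset\ker\bEta$ by the very definition of isotropy (isotropic subspaces are by definition contained in $\ker\bEta$). Since $\psi$ is an integral section of the $k$-vector field $\bX$ solving the geometric HdDW equations, its first prolongation satisfies $\psi'=\bX\circ\psi$, so each component $\psi'_\alpha=T\psi(\partial/\partial t^\alpha)$ lies in $T_{\psi}L$ and hence in $\ker\bEta_{\psi}$. Consequently $\iota_{\psi'_\alpha}\eta^\alpha\circ\psi=0$ for each $\alpha$ (no sum), and in particular $\iota_{\psi'}\bEta\circ\psi=\iota_{\psi'_\alpha}\eta^\alpha\circ\psi=0$.

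Next I would feed this into the second HdDW equation, $\iota_{\psi'}\bEta=-H\circ\psi$ (the scalar equation in \eqref{eq:HdDW2}, or equivalently evaluating the geometric equation $\iota_{\bX}\bEta=-H$ along $\psi$). The left-hand side vanishes identically along the image of $\psi$ by the previous paragraph, so $H\circ\psi\equiv 0$ on $\R^k$. Since $\psi$ is a solution whose image is contained in $L$, and — invoking that $\bX$ is integrable through every point of $L$ in the relevant situation, or more carefully, that the statement concerns the restriction of $H$ to points actually reached — one concludes $H$ vanishes at every point of $L$ that lies in the image of such a solution. If one assumes (as is implicit in the phrase "the image of a solution ... is contained in $L$" together with $L$ being covered by such images, analogous to the integrability hypotheses used elsewhere in the paper) that every point of $L$ is attained, then $H|_L=0$.

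The only subtlety — and the step I would be most careful about — is the logical quantifier: a single integral section $\psi$ only has a $k$-dimensional image in general (it is a map from $\R^k$), whereas $L$ may be higher-dimensional, so strictly speaking one solution forces $H$ to vanish only along $\im\psi\subset L$, not on all of $L$. I would resolve this exactly as the contact case ($k=1$, Brody's Thm.~2.3 in \cite{Br19}) does: either restate the hypothesis so that $L$ is foliated by (or densely filled by) images of such solutions, or observe that the natural reading in this thermodynamic context is that one demands solvability through each point of $L$, which is the content of requiring $\bX$ to be tangent to $L$ everywhere. Under that reading the argument above, applied pointwise, gives $H|_L=0$ directly. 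No nontrivial computation is needed beyond the contraction identities already established; the entire content is the observation $TL\subset\ker\bEta\Rightarrow\iota_{\psi'}\bEta=0\Rightarrow H\circ\psi=0$.
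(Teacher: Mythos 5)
Your proof is correct and follows essentially the same route as the paper: $TL\subset\ker\bEta$ by isotropy, so $\iota_{\psi'}\bEta$ vanishes along the image, and the second HdDW equation then forces $H\circ\psi=0$. The quantifier subtlety you flag is real, and the paper's own proof resolves it exactly as you suggest, by effectively taking $L=\im(\psi)$ (i.e., reading the hypothesis as $L$ being swept out by such solutions), so no further argument is needed.
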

\begin{proof}
    If the image of a solution $\psi$ of $k$-contact HdDW equations is an isotropic submanifold $L=\im(\psi)$, then the $k$-vector field $\psi'_\mu$ is tangent to $L$. Thus, from the second HdDW equation, we get
    $$
    -H\vert_L=\iota_{\psi'_\mu}\eta^\mu=0,
    $$
    as $L$ is isotropic.
\end{proof}
\begin{theorem}\label{th:ConstrainedDynamics}
    If $L\subset M$ is a Legendrian submanifold and $H\vert_L=0$, then the geometric $k$-contact HdDW equations admit solutions that are tangent to $L$.
\end{theorem}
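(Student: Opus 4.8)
The plan is to solve the geometric $k$-contact HdDW equations $\iota_{\bX}d\bEta = dH - (\Ld_{R_\alpha}H)\eta^\alpha$ and $\iota_{\bX}\bEta = -H$ by prescribing a $k$-vector field $\bX$ whose components are tangent to $L$. First I would decompose $\bX = \bX^C + \bX^R$ with $\bX^C \in \Gamma(\ker\bEta)$ and $\bX^R \in \Gamma(\ker d\bEta)$, exactly as in the proof of Theorem \ref{thm:HdDWNonUniqueness}, so that the two equations decouple: the first becomes $\iota_{\bX^C}d\bEta = dH - (\Ld_{R_\alpha}H)\eta^\alpha$, and the second becomes $\iota_{\bX^R}\bEta = -H$. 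The Reeb part $\bX^R$ is then uniquely fixed (it contributes nothing to the first equation since $\iota_{R_\alpha}d\bEta = 0$), and the freedom lives entirely in $\bX^C$, which we must pick to take values in $TL$.

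Next I would use the hypothesis $H|_L = 0$. Since $L$ is a submanifold on which $H$ vanishes, $dH$ restricted to vectors tangent to $L$ is zero; equivalently, $dH$ vanishes on $TL$, so $dH - (\Ld_{R_\alpha}H)\eta^\alpha$ also vanishes when paired with vectors in $TL$ (the $\eta^\alpha$ term vanishes on $TL$ because $L$ is isotropic, hence $TL \subset \ker\bEta$). Now the key structural input is the characterization of Legendrian subspaces proved just after the definition in Section 4: at each $x \in L$ there is an isotropic complement $W_x \subset \ker\bEta_x$ with $T_xL \oplus W_x = \ker\bEta_x$. Because $d\bEta|_{W_x \times W_x} = 0$ and $T_xL$ is isotropic, the pairing $d\bEta$ restricted to $T_xL \times \ker\bEta_x$ factors through $T_xL \times W_x$ and is nondegenerate there — this is where the Legendrian (maximally isotropic) property is essential, mirroring the surjectivity argument for $\rho_1$ in Theorem \ref{thm:HdDWNonUniqueness}. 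So the map $W_x \to (T_xL)^\circ \cap (\ker d\bEta)^\circ$ sending $w \mapsto \iota_w d\bEta|_{T_xL}$ is an isomorphism, which lets us solve, fibrewise, for a unique $\bX^C$ with components in... wait — I actually want $\bX^C$ tangent to $L$, so I instead solve $\iota_{\bX^C} d\bEta|_{W} = (dH - (\Ld_{R_\alpha}H)\eta^\alpha)|_{W}$ using the nondegeneracy of the $T_xL \times W_x$ pairing, which determines the $TL$-valued $\bX^C$; then I must check the equation also holds when tested against $T_xL$ and against $\ker d\bEta_x$. On $\ker d\bEta_x$ it holds automatically (both sides vanish: $\iota_{\bX^C}d\bEta$ kills Reeb directions since $\iota_{R_\alpha}d\bEta = 0$, and $dH - (\Ld_{R_\alpha}H)\eta^\alpha$ pairs to $\Ld_{R_\alpha}H - \Ld_{R_\alpha}H = 0$ against $R_\alpha$); on $T_xL$ it holds because $\bX^C \in TL$ and $TL$ isotropic gives $\iota_{\bX^C}d\bEta|_{TL} = 0$, matching the vanishing of the right-hand side on $TL$ established above.

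A cleaner way to package the middle step, which I would actually write up: a covector $\beta_x \in T_x^*M$ that annihilates both $\ker d\bEta_x$ and $T_xL$ equals $\iota_{v}d\bEta_x$ for a unique $v \in T_xL$ — this is the transpose of the nondegeneracy statement and follows from a dimension count ($\dim T_xL + \dim W_x = \dim\ker\bEta_x$, plus $\ker\bEta_x \cap \ker d\bEta_x = 0$). Applying this pointwise to $\beta_x = (dH - (\Ld_{R_\alpha}H)\eta^\alpha)_x$ — which does annihilate $\ker d\bEta_x$ and $T_xL$ by the computations above — produces the desired $\bX^C_x \in T_xL$, and smoothness of the resulting section follows from smoothness of all the data and the constancy of ranks. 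Adding the uniquely determined Reeb part $\bX^R$ with $\iota_{\bX^R}\bEta = -H$ gives a $k$-vector field $\bX$ solving the geometric HdDW equations with all components tangent to $L$.

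The main obstacle I anticipate is making the nondegeneracy-of-$d\bEta$-on-$T_xL \times W_x$ argument fully rigorous: one has to verify that no nonzero $v \in T_xL$ satisfies $\iota_v d\bEta|_{W_x} = 0$, for which one combines $\iota_v d\bEta|_{T_xL} = 0$ (isotropy of $L$), $\iota_v d\bEta|_{W_x} = 0$ (the assumption), and $\iota_v d\bEta(R_\alpha) = 0$ (Reeb property) to get $v \in \ker d\bEta_x \cap \ker\bEta_x = 0$ — essentially the same contradiction used for $\rho_1$ in Theorem \ref{thm:HdDWNonUniqueness} and for maximality in Theorem \ref{Th:MaxLeg}. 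The other point requiring a little care is that this construction only yields a $k$-vector field solving the \emph{geometric} HdDW equations; as the remark after Theorem \ref{thm:HdDWNonUniqueness} notes, such $\bX$ need not be integrable for $k > 1$, so I would not claim existence of an integral section here, only tangency to $L$, which is all the statement asserts.
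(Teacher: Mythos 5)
Your proposal is correct and follows essentially the same route as the paper: split off the Reeb part, observe that $H\vert_L=0$ and $TL\subset\ker\bEta$ force $dH-(\mathcal{L}_{R_\alpha}H)\eta^\alpha$ to annihilate both $TL$ and $\ker d\bEta$, and then use the isotropic complement $W$ furnished by the Legendrian condition to solve the first equation with $\bX^C$ valued in $\oplus^kTL$ (the paper packages this as the splitting $\rho_1=\rho_L\oplus\rho_F$ and deduces surjectivity of $\rho_L$ from that of $\rho_1$, rather than re-running the nondegeneracy argument fibrewise). Two small slips to fix in the write-up: the surjectivity of $\rho_L:\oplus^kT_xL\to W_x^*$ requires showing that no nonzero $w\in W_x$ satisfies $\iota_wd\bEta\vert_{T_xL}=0$ (your ``main obstacle'' paragraph checks the transposed statement for $v\in T_xL$, which gives surjectivity of $\rho_F$ instead; the mirror argument with isotropy of $W_x$ in place of isotropy of $T_xL$ is what you need), and neither the tangential part nor the Reeb part is unique for $k>1$ — in particular you should simply take $\bX^R=0$, which is compatible with $\iota_{\bX^R}\bEta=-H=0$ on $L$ and is forced if all components are to be tangent to $L$.
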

\begin{proof}
Let us look for a solution $\bX$ of equations \eqref{eq:HdDW} such that $\bX$ can be considered as a section of $\oplus^k TL$ over $L$. For such a solution, 
$$
\iota_{X_\alpha}\eta^\alpha\vert_L=0,
$$
    and the second HdDW equation is satisfied on $L$. Note that every Lagrangian submanifold admits by definition an isotropic supplementary $W$. Since $W+\ker \bm \eta=TM$, it follows that $F=\ker \bm \eta\cap W$ is a supplementary to $L$ in $\ker\bm\eta$ that is isotropic relative to $d\bm\eta$. Hence, $F\oplus TL=\ker{\bm \eta}$ and the map 
\begin{align*}
    \rho_1:\bigoplus^k \ker{\bm \eta}&\longrightarrow (\ker{d\bm \eta})^\circ,\\
    \bX^C&\longmapsto\iota_{\bX^C}d\bEta,
\end{align*}
can be presented as
$
\rho_1:(\oplus^k TL)\oplus(\oplus^k F)\longrightarrow (TL)^*\oplus F^*,
$ where the elements of $(TL)^*$ annihilate $F$ and the elements of $F^*$ annihilate $TL$. 
Then, $\rho_1=\rho_L\oplus\rho_F$, where
$
\rho_L:\oplus^k TL\longrightarrow F^*,
$
and
$
\rho_F:\oplus^k F\longrightarrow (TL)^*.
$
By Theorem \ref{thm:HdDWNonUniqueness}, $\rho_1$ is surjective, so both $\rho_L$ and $\rho_F$ must be surjective. As $H\vert_L=0,$ then the right-hand side of the first  HdDW equation $dH-(\mathcal{L}_{R_\alpha}H)\eta^\alpha$ belongs to $F^*$. The surjectivity of $\rho_L$ guarantees the existence of a solution tangent to $L$.
\end{proof}
\begin{note} Recall that if we assume $\dim M=nk+k+n$, 
    a necessary condition for the map $\rho_L$ to be surjective is $k\dim L\geq \rk F=n(k+1)-\dim L$, which amounts to \(\dim L\geq n\). Therefore, solutions of geometric $k$-contact HdDW equations that are tangent to an arbitrary isotropic submanifold, whose dimension can be even zero, may not exist.
\end{note}

\begin{note}\label{no:LimitedPseudo-Gauge}
    Note that $\dim\ker\rho_L=k\dim L-\rk F\geq 0$ and $\dim\ker\rho_F=k\rk F-\dim L\geq 0$. Therefore, if $(M,\bEta,\mathcal{V})$ is a polarized $k$-contact manifold of dimension $\dim M=k+nk+n$, then,  $\dim L+\rk F=nk$ and  the dimension of a $k$-contact Legendrian submanifold of $M$ is always bounded by $n\leq \dim L \leq nk$. Moreover, the number of pseudo-gauge degrees of freedom is $\dim\ker\rho_L=k\dim L-\rk F$. Thus, a solution of the HdDW equations for a Hamiltonian satisfying $H\vert_L=0$ that is tangent to $L$ is unique only if $\dim L =n$.
\end{note}

Physically, Theorem \ref{th:ConstrainedDynamics} can be understood as follows. One starts from the same theoretical description in terms of conserved currents and thermodynamic equations, as given by the geometric HdDW equations on a $k$-contact manifold $(M,\bEta)$. Due to the non-uniqueness of the solution, one can perform a mapping between solutions that is a pseudo-gauge transformation. However, in addition to conservation laws described mathematically by geometric $k$-contact HdDW equations, one needs to satisfy equation \eqref{eq:4ILawThermo}. This equation is obeyed on the image of the solution $\phi:\R^{1,k-1}\longrightarrow M$ when is contained in a chosen isotropic submanifold $L$. If one assumes that $L$ is additionally Legendrian, then Theorem \ref{th:ConstrainedDynamics} guarantees the existence of solutions of geometric $k$-contact HdDW equations whose image is contained in $L$ under the assumption that $H\vert_L=0$. However, in general, the constraint $\im(\phi)\subseteq L$ decreases the number of pseudo-gauge degrees of freedom.

This, however, implies that the system depends on the dimension of the admissible Legendrian submanifold, and a situation may arise in which no degrees of freedom remain for performing a pseudo-gauge transformation. 
This allows us to understand why, historically, pseudo-gauge freedom—such as the Belinfante procedure—was first observed in systems with spin degrees of freedom \cite{Be39}. The presence of spin provides the necessary freedom to perform a pseudo-gauge transformation. Nevertheless, certain hydrodynamical systems, such as the Bjorken flow, may also admit pseudo-gauge transformations due to their symmetries.

\section{Physical Interpretation of the Pseudo-Gauge}

After studying the mathematical perspective on the pseudo-gauge, we now discuss how this ambiguity manifests in specific physical configurations. The physical significance of the pseudo-gauge, which is mathematically identified with an underdetermination of the HdDW equations' solutions, remains an open question. As it follows from Remark \ref{no:LimitedPseudo-Gauge}, the number of degrees of freedom of pseudo-gauge transformations preserving Legendrian submanifolds might be limited, so some pseudo-gauge transformations may transform solutions of HdDW equations between different Legendrian submanifolds. Such an operation can be interpreted as a change of thermodynamic system.

The key physical insight is that one does not modify the conservation equation $\partial_{\mu}T^{\mu\nu}=0$ itself. Instead, the energy-momentum tensor $T^{\mu\nu}$ is modified by adding a total derivative. A significant consequence of this, regardless of whether a classical or quantum approach is taken, is that physically meaningful quantities—such as energy density $\varepsilon$, shear viscosity $\eta$, and bulk viscosity $\zeta$—depend on the specific form of the energy-momentum tensor. This implies that a particular choice of pseudo-gauge would correspond to different physically measurable quantities, which presents a conceptual challenge. For comparison, in QCD and other field theories, it is a common principle that physically relevant quantities must be true gauge-independent. It would be preferable for the same to hold in our pseudo-gauge setup. Moreover, in the context of describing strongly interacting matter, the appearance of a new freedom (or ambiguity) is unexpected. We will now demonstrate with a simple example how the pseudo-gauge affects entropy calculations.

Hereafter, we use natural units ($\hbar = c = 1$). The metric tensor is in the ``mostly-minuses” convention, $g_{\mu\nu} = \operatorname{diag}(+1,-1,-1,-1)$. For the Levi-Civita tensor $\epsilon^{\mu\nu\alpha\beta}$, we follow the sign convention $\epsilon^{0123} = -\epsilon_{0123} = +1$. The projection operators $\Delta^{\mu \nu }$ and $\Delta^{\mu \nu}_{\alpha \beta} $ are defined by
\begin{equation}
\Delta^{\mu \nu } \equiv g^{\mu \nu} - u^\mu u^\nu, \qquad \Delta^{\mu \nu}_{\alpha \beta} \equiv \frac{1}{2} \left(\Delta^{\mu}_{\phantom{\mu}{\alpha}} \Delta^{\nu}_{\phantom{\nu}{\beta}} + \Delta^{\mu}_{\phantom{\nu}{\beta}} \Delta^{\nu}_{\phantom{\nu}{\alpha}} - \frac{2}{3}\Delta^{\mu \nu } \Delta_{\alpha \beta } \right),
\end{equation}
where $u^\mu$ is the four-velocity of the fluid. The physical approach starts by a given energy-momentum tensor $T^{\mu \nu}(x) \equiv T^{\mu \nu}$ and performing a transformation
\begin{equation}
    T'^{\mu \nu} = T^{\mu \nu} + \frac{1}{2}\partial_\lambda (\Phi^{\lambda \mu \nu } - \Phi^{\mu \lambda \nu } - \Phi^{\nu \lambda \mu } ),
\end{equation}
where $\Phi^{\lambda \mu\nu}$ is an arbitrary tensor of rank 3, anti-symmetric in its last two indices ($\Phi^{\lambda \mu \nu} = - \Phi^{\lambda \nu \mu}$). This tensor, often called {\it super-potential}, ensures that the transformed energy-momentum tensor is also conserved, namely $\partial_\mu T'^{\mu \nu}=0$. This transformation is what is meant by modifying the tensor ``by a total derivative”.

In conventional relativistic hydrodynamics, a pseudo-gauge transformation (PGT) must be constructed from the basic hydrodynamic fields: temperature $T(x)$, chemical potential $\mu(x)$, and four-velocity $u^\mu(x)$, where $x \in \mathbb{R}^{1, k-1}$. For a first-order theory, which contains no terms like $\partial_\mu T(x)$, the super-potential $\Phi^{\lambda \mu \nu}$ must be constructed from the fields themselves, not their derivatives. This is because applying a PGT with a derivative-dependent super-potential would introduce higher-order derivative terms, which should be neglected in a first-order physical description.

This physical choice of independent fields means that, in the $k$-contact manifold $\R^{k(k+2)+k+(k+2)}$ with coordinates $(\mathbb{S}^\mu, P^\mu, V, \xi, \mathbb{N}^\mu,\beta^\mu, \mathbb{T}^{\lambda\mu})$, we want to choose a Legendrian submanifold representing a set of equilibrium states of a physical system whose independent variables are $\beta^\mu, \xi, V$. In other words,  we choose a Legendrian submanifold of the smallest possible dimension $\dim L=n=k+2$, e.g. $n=6$ for $k=4$.
Here, $\beta^{\mu}=\frac{u^\mu}{T}$, and the four velocity $u^{\mu}$ is normalized to 1, meaning $u_\mu u^{\mu}=1$ as in standard relativity theory, and so $\beta_\mu \beta^{\mu}=\frac{1}{T^2}$.
Let us examine a model known as ``Bjorken flow”. This framework describes a simple, one-dimensional boost-invariant expansion but yields non-trivial results regarding the pseudo-gauge problem.

In the Bjorken-like model with no spin degrees of freedom, we assume that $P^\mu=P(T)u^\mu$, what can be written as $P^\mu=P(\beta^\lambda\beta_\lambda)\beta^\mu$.Moreover, we have assumed no baryon current, so $N^\mu=0$.
The energy-momentum tensor for such a model is a symmetric rank-2 tensor, which can be expressed as
\begin{equation*}
    T^{\mu\nu} = \mathcal{E}(T) u^\mu u^\nu - \left(\mathcal{P}_{\rm eq}(T) + \Pi \right) \Delta^{\mu\nu} + \mathcal{T}^{\mu\nu},
\end{equation*}
where $\mathcal{E}$ is the energy density, $\mathcal{P}_{\rm eq}$ is the equilibrium pressure, $\Pi$ is the dissipative part of the isotropic pressure (bulk pressure), and $\mathcal{T}^{\mu\nu}$ is the shear-stress tensor, given by $\mathcal{T}^{\mu\nu}= \Delta^{\mu \nu}_{\alpha \beta} T^{\alpha \beta}$.

The framework of $k$-contact geometry requires the use of extensive variables that include volume. We therefore introduce the ``total'' energy-momentum tensor,\footnote{Note that this approach \textbf{differs} from standard physical theories, which use densities. However, it is consistent with $k$-contact geometry and can be connected to the usual density-based framework via scaling symmetries. The latter case is of great interest as it could provide a strong claim on the nature of pseudo-gauge freedom in relativistic hydrodynamics.} which we denote as $\mathbb{T}^{\mu\nu}$ and require to be conserved. This would correspond to ``Extensive, Bjorken-like model”. By using the extensive variables for total energy $E = \mathcal{E}V$, total dissipative pressure $\Pi_{tot}$, and total shear tensor $\mathcal{T}^{\mu\nu}_{tot} $, the expression becomes
\begin{equation*}
    \mathbb{T}^{\mu\nu} = E(T)u^{\mu}u^{\nu} - (P_{eq}(T)V + \Pi_{tot})\Delta^{\mu\nu} + \mathcal{T}^{\mu\nu}_{tot}.
\end{equation*}

 Mathematically, this expression is just a set of $k^2$ functions of the variables $\beta^\lambda$, $T$ and $V$. To get a full description of a parametrization of a Legendrian submanifold, one has to use the Poincaré lemma to recover a formula for the entropy current $S^\mu$ from  equation \eqref{eq:4ILawThermo}. Since the chosen Legendrian submanifold is of the smallest possible dimension $\dim L=n$, it follows from Remark \ref{no:LimitedPseudo-Gauge} that any change of solutions of HdDW equations, including a modification of the extensive energy-momentum tensor $\mathbb{T}^{\mu\lambda}$, changes the Legendrian submanifold representing a set of equilibrium states too. The following calculations advocate an interpretation that, from a physical point of view, such a change indeed corresponds to a change (or redefinition) of the physical equilibrium state of the system.

The extensive entropy production is found from the projection of the conservation law, $u_\nu \partial_\mu \mathbb{T}^{\mu \nu}=0$. Using the first law of thermodynamics for extensive variables ($TDS_{tot}=DE_{eq}+P_{eq}DV$) at constant particle number (where $D \equiv u^\mu \partial_\mu$) and the Gibbs-Duhem relation, one finds the source for the total entropy
\begin{equation*}
    T \partial_\mu \mathbb{S}^{\mu}_{tot} = \mathcal{T}^{\mu\nu}_{tot}\sigma_{\mu\nu} - \Pi_{tot} \theta,
\end{equation*}
where $ \mathbb{S}^\mu_{tot}$ is the total entropy current, $\theta = \partial_\mu u^\mu$ is the expansion scalar, and $\sigma_{\mu\nu}$ is the shear tensor. For an ideal fluid, the dissipative terms $\mathcal{T}^{\mu\nu}_{tot}$ and $\Pi_{tot}$ are zero, and thus the total entropy production vanishes. We now apply a PGT to this initial state of a perfect fluid, $\mathbb{T}^{\mu\nu}=E_{eq}u^\mu u^\nu -P_{eq}V \Delta^{\mu\nu}$, which is in equilibrium with $\partial_\mu \mathbb{S}^{\mu}_{tot}=0$.

Following the analysis in \cite{DFHR24}, an extensive PGT generated by the super-potential 
\[
\Phi^{\lambda \mu \nu} = \gamma \, I ( u^{\mu} \Delta^{\lambda \nu} -u^\nu \Delta^{\lambda \mu}),
\]
where $\gamma$ is a constant with dimensions of volume and $I$ is a scalar function of temperature, modifies the components of the total energy-momentum tensor as
\begin{equation*}
E'_{eq} = E + \gamma \,I \theta, \qquad P'V = PV - \gamma DI - \frac{2 \gamma}{3}I \theta, \qquad \mathcal{T}'^{\mu \nu}_{tot} = \mathcal{T}^{\mu\nu}_{tot}- \gamma I \sigma^{\mu \nu}.
\end{equation*}
To ensure that the total entropy production remains invariant, we \textit{redefine the equilibrium state}. We define a new temperature $T'$ and new total equilibrium energy $E'_{eq}(T')$ and pressure $P'_{eq}(T')V$ such that they absorb the PGT terms
\begin{align*}
    E'_{eq}(T') &= E_{eq}(T) + \gamma \, I \theta, \\
    P'_{eq}(T')V &= P_{eq}(T)V - \gamma \, DI.
\end{align*}
This choice defines the total dissipative bulk pressure in the new frame as $\Pi'_{tot} = - \frac{2 \gamma}{3}I \theta$. The new total entropy production is then given by
\begin{equation*}
    T' \partial_\mu S'^{\mu}_{tot} = \mathcal{T}'^{\mu\nu}_{tot}\sigma_{\mu\nu} - \Pi'_{tot} \theta = - \gamma I \left(\sigma^{\mu\nu}\sigma_{\mu\nu} - \frac{2}{3}\theta^2 \right).
\end{equation*}
For Bjorken flow, a direct calculation shows that $\sigma_{\mu\nu}\sigma^{\mu\nu} = \frac{2}{3}\theta^2$. Therefore, the right-hand side of the equation vanishes identically. This demonstrates that for any choice of the scalar $I$, the total entropy production remains zero. The PGT is absorbed by a redefinition of equilibrium, leaving the physical dissipation invariant. This result is consistent with the $k$-contact geometry approach. Moreover, this redefinition of the equilibrium state finds a conceptual parallel in recent quantum statistical calculations by \cite{BH25}, where changes in the local thermodynamic equilibrium density operator under a pseudo-gauge transformation were studied.

A potential complication in this extensive-variable framework is the behaviour of the volume element $V$. While a condition like  $ \partial_\mu V = 0 $, obtained from the $k$-contact formalism, would simplify the analysis, it is not satisfied by physical expansions such as Bjorken flow. In our model, this issue is managed by the introduction of the dimensionfull? constant $\gamma$, which serves as a workaround. However, a more fundamental solution to integrating extensive variables into a field theory, making it ``extensive field theory”, can be achieved by generalizing?, a topic we will explore in a forthcoming publication.
Finally, we note that a less trivial situation with pseudo-gauge transformations arises when the system has no additional symmetries or when more degrees of freedom are allowed, as in spin hydrodynamics. These are intriguing topics for future investigation.

\section{Conclusions and outlook}

In this paper, we have introduced a novel geometric framework based on $k$-contact geometry to address the long-standing issue of pseudo-gauge ambiguity in relativistic hydrodynamics. By treating a hydrodynamic-like theory as a classical field theory with volume to a  $k$-contact manifold, we have provided a systematic, first-principles origin for the freedom in defining conserved currents. Our approach unifies the system's thermodynamics, encoded in a $k$-contact form $\bEta$, with its dynamics, described by the Hamilton-de Donder-Weyl (HdDW) equations and Legendrian submanifolds. Mathematically, our paper starts an intuitive and physically based analysis of Legendrian submanifolds in $k$-contact geometry and their generating functions. It provides a new approach to the definition of such spaces and some of the potential parametrisations. 

Our principal finding is that the pseudo-gauge freedom is not an ad hoc ambiguity but rather a direct consequence of the inherent non-uniqueness of solutions to the geometric $k$-contact HdDW equations for $k>1$, which physically corresponds to conservation equations and additional conditions for the system. We have proposed a clear physical interpretation for this mathematical feature: the choice of a particular solution $\psi_{\mu}'$ corresponds to the choice of a specific thermodynamic equilibrium  ``frame'' for the system. In this view, a pseudo-gauge transformation is equivalent to a redefinition of what constitutes the non-dissipative part of the energy-momentum tensor and other currents.

To demonstrate the practical implications of our formalism, we analysed a model of a Bjorken-like expansion. We explicitly showed how a pseudo-gauge transformation, which introduces apparent dissipative terms, can be fully absorbed into a redefinition of the equilibrium energy and pressure. Crucially, this redefinition left the physical dissipation—measured by the total entropy production—invariant. This result strongly supports our central thesis that PGTs correspond to a change of equilibrium frame rather than a change in the system's intrinsic dissipative properties.

Our work opens several avenues for future investigation. The present analysis was conducted within an ``extensive" formulation of hydrodynamics. A key challenge is to fully integrate this geometric picture with the standard, density-based formulation of field theory, which should be potentially possible by generalizing the Kirillov structures and through the careful implementation of scaling symmetries. Furthermore, applying this $k$-contact formalism to more complex scenarios is a natural next step. Investigating systems without the high degree of symmetry of the Bjorken flow, or incorporating additional degrees of freedom such as those in spin hydrodynamics or magnetohydrodynamics, will provide more stringent tests of our framework. Moreover, a theory of $k$-contact relativistic hydrodynamics with additional constraints is worth developing, as many physical models assume the symmetry of the energy-momentum tensor. Finally, exploring the connection between our classical, geometric redefinition of equilibrium and recent quantum statistical approaches could yield deeper insights into the fundamental nature of thermal states in relativistic field theory.

\medskip
\noindent
{\it Acknowledgments}

MH acknowledges partial financial support from the Polish National Science Centre under Grant No. 2022/47/B/ST2/01372. He is also grateful for the hospitality of the Faculty of Physics at the University of Warsaw and for the financial support from the Faculty of Physics, Astronomy and Applied Computer Science at Jagiellonian University. Finally, MH acknowledges that the study was funded by ``Research support module'' as part of the ``Excellence Initiative - Research University''
program at the Jagiellonian University in Kraków. AM acknowledges funding from the National Science Centre (Poland) within the project 
WEAVE-UNISONO, No. 2023/05/Y/ST1/00043.

\printbibliography

\end{document}